\renewcommand{\Im}{\operatorname{Im}}
\def\i{\mathrm{i}}
\def\d{\mathrm{d}}
\def\g{\epsilon}
\def\m{\mathfrak{m}}
\def\vf{\mathrm{dvol}_\Sigma}
\DeclareMathOperator{\tr}{tr}
\DeclareMathOperator{\sign}{sign}
\DeclareMathOperator{\vol}{vol}
\DeclareMathOperator{\erf}{erf}
\DeclareMathOperator{\erfc}{erfc}
\DeclareMathOperator{\pv}{pv}
\DeclareMathOperator*{\Clim}{Clim}
\title{\huge\boldmath Localization and resummation\\of unstable instantons in 2d Yang--Mills}
\author[a]{Luca Griguolo,}
\author[b]{Rodolfo Panerai,}
\author[c,d]{Jacopo Papalini,}
\author[e]{Domenico Seminara,}
\author[f]{\\and Itamar Yaakov}
\affiliation[a]{Dipartimento SMFI, Universit\`a di Parma and INFN Gruppo Collegato di Parma\\Viale G.P.\ Usberti 7/A, 43100 Parma, Italy}
\affiliation[b]{Institute for Theoretical Physics, University of Cologne\\Z\"ulpicher Stra{\ss}e 77, 50937 K\"oln, Germany}
\affiliation[c]{Galileo Galilei Institute for Theoretical Physics, INFN\\Largo Enrico Fermi 2, 50125 Firenze,
Italy}
\affiliation[d]{Department of Physics and Astronomy, Ghent University\\Krijgslaan, 281-S9, 9000 Gent, Belgium}
\affiliation[e]{Dipartimento di Fisica, Universit\`a di Firenze and INFN Sezione di Firenze\\Via G.\ Sansone 1, 50019 Sesto Fiorentino, Italy} 
\affiliation[f]{STAG Research Center \& Mathematical Sciences, University of Southampton\\Highfield, Southampton SO17 1BJ, UK}
\emailAdd{luca.griguolo@unipr.it}
\emailAdd{rpanerai@uni-koeln.de}
\emailAdd{jacopo.papalini@ugent.be}
\emailAdd{seminara@fi.infn.it}
\emailAdd{i.yaakov@soton.ac.uk}
\abstract{
  We compute the exact all-orders perturbative expansion for the partition function of 2d $\mathrm{SU}(2)$ Yang--Mills theory on closed surfaces around higher critical points. We demonstrate that the expansion can be derived from the lattice partition function for all genera using a distributional generalization of the Poisson summation formula. We then recompute the expansion directly, using a stationary phase version of supersymmetric localization. The result of localization is a novel effective action which is itself a distribution rather than a function of the supersymmetric moduli. We comment on possible applications to A-twisted models and their analogs in higher dimensions. 
}
\newtheorem*{proposition}{Proposition}
\begin{document} 
\maketitle
\flushbottom

\section{Introduction}

Pure Yang--Mills theory in two spacetime dimensions (2dYM) is one of only a handful of examples of exactly solvable interacting gauge theories, another being the closely related Chern--Simons theory in three dimensions.
Solvability can be traced back to any number of properties: the lack of propagating degrees of freedom; the existence of a subdivision invariant lattice model \cite{Migdal:1975zg}; invariance under area preserving diffeomorphisms \cite{Witten:1991we}; a type of hidden supersymmetry \cite{Witten:1992xu}; the existence of a gauge which allows one to explicitly evaluate the path integral \cite{Blau:1993hj}; and possibly others.

Despite this simplicity, 2dYM resembles its higher dimensional cousins in some respects. In the large-$N$ limit, for instance, the theory exhibits behaviors that are expected to hold true also in higher dimensions: it can be reformulated exactly as a string theory \cite{Gross:1992tu, Gross:1993hu}, and experiences phase-transitions at critical values of the coupling \cite{Douglas:1993iia, Gross:1994mr}.

At finite gauge coupling, 2dYM displays a number of interesting mathematical phenomena whose detailed study may be relevant for more physical models. These phenomena include an asymptotic, but not generally convergent, formal perturbation series. Additionally, a plethora of results connecting 2dYM on the two-sphere, limited to the zero instanton sector, with perturbative computations in $2+\epsilon$ dimensions \cite{Bassetto:1994vj, Bassetto:1998sr}, and with exact results for $\mathcal{N}=4$ Super Yang--Mills in four dimensions \cite{Drukker:2007qr,Pestun:2009nn,Giombi:2009ds,Bassetto:2009rt,Giombi:2018qox}. We also have detailed knowledge of non-perturbative phenomena: generically unstable instantons---non-trivial solutions of the Yang-Mills equation---contribute to both the partition function \cite{Witten:1992xu}, and the expectation value of other observables \cite{Gross:1994ub, Giombi:2009ek}.

The limit where the gauge coupling vanishes is also interesting, because 2dYM becomes a topological quantum field theory (TQFT).
In \cite{Witten:1991we}, Witten related the partition function of topological 2dYM on a closed Euclidean manifold to the symplectic volume of the moduli space of flat connections for the gauge group.
The followup \cite{Witten:1992xu} further demonstrated that the intersection numbers of this moduli space can be conveniently computed by recasting topological 2dYM as a twisted supersymmetric gauge theory.
In this supersymmetric setup, a number of observables in 2dYM turn out to be protected, that is to preserve some or all of the supersymmetry. These include the partition function on closed Euclidean manifolds, and some operators, both local and extended.
Curiously, the Wilson loop is not included in the list of protected operators, despite being fully calculable in other approaches \cite{Rusakov:1990rs}.

The computation in the topological theory relies on a relationship between observables in the topological theory and those in a supersymmetric version of 2dYM at finite coupling (the ``physical theory'').
Unstable configurations, which contribute to observables in this theory, do not usually make contributions to protected observables.
In the present context, their appearance in the localization computation was explained in \cite{Witten:1992xu}, in the process of introducing non-abelian localization.

The existence of a non-trivial perturbation series could also seem in tension with generic expectations for protected observables. The vacuum expectation value of a protected observable would be expected to depend on the coupling constant only to some finite order, usually to one loop. Loops around non-perturbative BPS instanton contributions, which can be shown to arise in supersymmetric localization \cite{Pestun:2007rz}, are similarly limited. However, as is the case with 2dYM, the field theoretic coupling constant does not always coincide with the deformation parameter used in localization, leading to interesting results.

The nature of the weak-coupling expansions in supersymmetric theories have been studied in a number of papers, mainly in connection with the resurgence program \cite{Aniceto:2018bis}.
They exhibit a variety of behaviors that depend on the theory and the space-time dimension, including truncation at a finite order \cite{Dorigoni:2017smz,Dorigoni:2019kux,Fujimori:2022qij}; a finite radius of convergence \cite{Aniceto:2014hoa}; and asymptoticity associated with Borel \cite{Aniceto:2014hoa,Honda:2016mvg,Honda:2017cnz} and non-Borel summability \cite{Honda:2016vmv}.

In all the known cases, supersymmetric localization yields a non-trivial perturbative expansion by expanding the one-loop determinants around small values of the moduli. In this sense the computation is still one-loop exact, being determined by the semiclassical approximation induced by the localizing term. The perturbative series appearing in 2dYM seems, in this respect, somewhat different: the perturbative expansion being induced by a \emph{dual} expansion in the moduli.

In this paper, we will investigate the appearance of a non-trivial perturbation series in 2dYM. Our explanation will come from a combination of two different strategies for extracting the series, whose results match up to some plausible assumptions regarding localization.

First, following \cite{Witten:1992xu}, we will use Poisson summation to convert the well-known form of the partition function, in terms of irreducible representations, into a form where the perturbation series around each instanton is manifest. For genus greater than $1$, this requires the use of a recent generalization of the ordinary Poisson summation formula, which applies to distributions rather than functions. We will then reverse engineer the result in order to extract the elements of a localization computation. Specifically, we will extract an effective action---or one-loop determinant---for the constant mode of the scalar field in the \emph{basic multiplet} described by Witten in \cite{Witten:1992xu}.

The second strategy will be to perform a somewhat non-standard localization computation of 2dYM, involving oscillatory rather than exponentially suppressed path integrals. Among the moduli arising from this computation will be the same scalar field mode referenced above. We will argue that the asymptotic expansion of these integrals, which is all that contributes to the localization result, yields an effective action for this mode in the form of a tempered distribution. Integration over the moduli will then lead to a result compatible with Poisson summation.

We will briefly review Witten's results for the partition function, using both the lattice formulation \cite{Witten:1991we} and non-abelian localization \cite{Witten:1992xu}, in Section~\ref{SEC:2dYM}. In Section~\ref{SEC:Z_from_poisson_summation}, by means of a generalization of the conventional Poisson summation formula, we will extract the full perturbative expansion associated with each instanton sector for the $\mathrm{SU}(2)$ partition function. In Section~\ref{SEC:localization}, we will argue that the same result can be obtained through a localization computation in which the contribution of an arbitrary instanton sector is associated with the integral over the fluctuations around higher critical points of the localizing action. At the beginning of this section, we review Witten's argument and apply it to a toy model in finite dimensions. Section~\ref{SEC:discussion} contains our conclusions and possible further developments of our investigations. Some technical aspects of the paper are contained in two Appendices.

\section{Two-dimensional Yang--Mills theory}\label{SEC:2dYM}

In this section, we provide a brief introduction to the relevant aspects of two-dimensional Yang--Mills theory (2dYM), following \cite{Witten:1991we}.
The section includes no new results.
We will also use facts about the space of connections on a principle $G$-bundle over a Riemann surface $\Sigma$, which were derived by Atiyah and Bott in \cite{Atiyah:1984px,Atiyah:1982fa}. Additional results specific to Yang--Mills connections are given in Appendix~\ref{sec:Yang-Mills-connections}.

\paragraph{The space of connections.}

Let $G$ be a compact simple connected Lie group.
We denote its Lie algebra by $\mathfrak{g}$, and follow the mathematics conventions whereby $\mathfrak{g}$ is represented by \emph{anti-Hermitian} matrices.
Moreover, $\mathfrak{g}$ admits a $G$-invariant inner product $\langle\cdot,\cdot\rangle$ which is unique up to rescaling.
For $\mathrm{SU}(N)$, this inner product coincides with the trace in the fundamental representation:
\begin{align}
  \langle a,b\rangle \equiv -\tr(ab) \;.
\end{align}

Let $\Sigma$ be an oriented Riemannian manifold of dimension $2$, and let $A$ be a connection on a principle $G$-bundle over $\Sigma$. $A$ can be represented locally by and adjoint-valued one-form. The curvature of $A$, given by
\begin{align}
  F = \d A + A\wedge A \;,
\end{align}
is an adjoint-valued two-form.
The set $\mathcal{A}$ of all connections on a fixed principle bundle over $\Sigma$ is an infinite-dimensional affine space. Let $\delta A$ denote a tangent vector at some point in this space. $\delta A$ is a globally-defined adjoint-valued one-form on $\Sigma$. The Atiyah--Bott symplectic form on $\mathcal{A}$ is defined as follows \cite{Atiyah:1984px,Atiyah:1982fa}:
\begin{align}
  \omega_{\text{AB}}(\delta A,\delta B) \equiv \frac{1}{4\pi^{2}} \int_{\Sigma} \tr(\delta A\wedge\delta B) \;.
\end{align}
Note that $\omega_{\text{AB}}$ is constant, i.e.\ independent of the connection, and thus formally closed. 

Using the metric on $\Sigma$, and the associated Hodge star operator $\star$, one can also define on $\mathcal{A}$ an integrable complex structure $J$ and an inner product $(\cdot,\cdot)$ as
\begin{align}
  J \, \delta A
    &\equiv -\star\delta A \;, \\
  (\delta A,\delta B)
    &\equiv -\frac{1}{4\pi^{2}}\int_{\Sigma}\tr(\delta A\wedge\star\,\delta B) \;.
\end{align}
$\omega_{\text{AB}},J$ and the metric associated to $(\cdot,\cdot)$ form a compatible triple
\begin{align}
  (\delta A,\delta B) = \omega_{\text{AB}}(\delta A,J\,\delta B)\,,
\end{align}
and formally endow $\mathcal{A}$ with a K\"ahler structure. 

The group of gauge transformations $\mathcal{G}$, automorphisms of a fixed principle bundle, acts symplectically on the space of connections, with a moment map given by
\begin{align}
  \mu(A) = -\frac{F}{4\pi^{2}}\,.
\end{align}
The moduli space of flat connections is formally given by the infinite dimensional symplectic reduction of the space of connections with vanishing curvature by the symplectic action of the gauge group
\begin{align}
  \mathcal{M}_{\text{flat}} \equiv \mathcal{A}/\!\!/\mathcal{G} = \{A\,|\,\mu(A)=0\} / \mathcal{G} \;.
\end{align}

\paragraph{The theory.}

The action of 2dYM reads
\begin{equation}\label{eq:Yang_Mills_action}
  S_{\text{YM}}[A] = -\frac{1}{8\g}\int_{\Sigma}\tr(F\wedge\star F) \;,
\end{equation}
where $\g$ is a coupling constant with dimensions of a mass squared,\footnote{
  Our coupling constant $\g$ is related to the one used by Witten in \cite{Witten:1992xu} as $\g=\pi^2\epsilon_\text{Witten}$.}
and $\tr$ is the trace in the fundamental representation of $\mathrm{SU}(N)$.
The equations of motion stemming from \eqref{eq:Yang_Mills_action} are the Yang--Mills equations 
\begin{align}
  \d_A\star F = 0 \;,
\end{align}
where $\d_A = \d + A$ is the gauge-covariant extension of the exterior derivative. 

The partition function of quantum 2dYM is defined as 
\begin{align}
  Z \equiv \frac{1}{|Z(G)|}\int\mathcal{D}A \; e^{-S_{\text{YM}}[A]} \;,
\end{align}
in terms of an appropriately gauge-fixed version of the path integral.
$|Z(G)|$ is the volume of the center of $G$. We will consider only examples in which $Z(G)$ is finite, in which case $|Z(G)|$ is the order of $Z(G)$.
The action $S_{\text{YM}}$ is super-renormalizable and does not require field-dependent counterterms.
$Z$ is defined up to scheme-dependent finite counterterms with parameters $u_{0}, v_{0}$, which multiply $Z$ by a factor
\begin{align}\label{eq:allowed_renormalization}
  e^{u_{0}\g\rho(\Sigma) + v_{0}\chi(\Sigma)} \;,
\end{align}
where $\chi(\Sigma) \equiv 2-2g$ is the Euler characteristics of $\Sigma$, and $\rho(\Sigma)$ is the total area.

The theory can be re-written in an equivalent form by introducing an adjoint-valued scalar $\phi$, and using the action 
\begin{align}\label{eq:YM_BF_action}
  S_{\text{BF}}[A,\phi] = -\frac{\i}{4\pi^2}\int_{\Sigma}\tr(\phi F) - \frac{\g}{8\pi^4}\int_{\Sigma}\vf\,\tr(\phi^{2}) \;,
\end{align}
where with $\vf$ we denote the metric volume form on $\Sigma$.

This formulation makes manifest an important feature of the theory: the action \eqref{eq:YM_BF_action} depends on the geometry of $\Sigma$ only through $\vf$. As a consequence, the resulting theory is only sensitive to the area
\begin{align}
  \rho(\Sigma) \equiv \int_{\Sigma} \vf \;,
\end{align}
the unique invariant of such a measure in dimension $2$.
Moreover, the Yang--Mills coupling and the volume form appear only through the combination $\g\,\vf$. This implies that the theory depends on the coupling only through the dimensionless combination $\g\,\rho(\Sigma)$.

The limit $\g\to0$, or equivalently $\rho(\Sigma)\to0$, defines a BF type TQFT. Witten showed in \cite{Witten:1991we} that the partition function in this limit is proportional to the symplectic volume of the moduli space $\mathcal{M}_{\text{flat}}$ of flat $G$ connections on $\Sigma$, relative to the Atiyah--Bott symplectic form \cite{Atiyah:1982fa}.\footnote{
  The proportionality constant consists of the number of elements in the center of $G$, a normalization factor coming from the relationship between the symplectic form and the path integral measure, and a standard renormalization of the form $e^{v_{0}\,\chi(\Sigma)}$ (cf.\ Eq 4.72 of \cite{Witten:1991we}).
}

\paragraph{The lattice gauge theory approach.}

2dYM admits a discretized lattice gauge theory model with the standard degrees of freedom: an element $U\in G$ living on every edge \cite{Wilson:1974sk}.
The lattice Yang--Mills action in two dimensions takes a particularly simple form \cite{Migdal:1975zg}.
Let $\mathfrak{R}$ denote the set of irreducible representations of $G$, $\dim\alpha$ the dimension of $\alpha\in\mathfrak{R}$, and $c_{2}(\alpha)$ the value of the second Casimir.
Let $\mathcal{U}_P$ be the holonomy around an elementary lattice plaquette $P$, i.e.\ the ordered product of the $U$'s, and $\chi_{\alpha}(\mathcal{U}_P)$ its character in the representation $\alpha$.
The lattice Yang--Mills action can be written in this basis as
\begin{equation}\label{eq:Migdal_measure}
  \Gamma = \sum_{\alpha\in\mathfrak{R}} \dim\alpha \; e^{-\g\,\rho(P)\,c_{2}(\alpha)} \; \chi_{\alpha}(\mathcal{U}_P) \;,
\end{equation}
and the lattice partition function can be written as 
\begin{align}
  Z_{\text{lattice}} = \int \prod_{e\in\{\text{edges}\}} \d U_{e} \; \prod_{P\in\{\text{plaquettes}\} }\Gamma(\mathcal{U}_P,\rho(P)) \;.
\end{align}
With the measure \eqref{eq:Migdal_measure}, the lattice gauge theory is invariant under subdivisions of the lattice.
The partition function on a discretized surface can therefore be computed using a single plaquette, and the continuum limit is trivial \cite{Migdal:1975zg,Witten:1991we}.
The partition function on a genus $g$ surface can be shown using this approach to be \cite{Witten:1991we}
\begin{align}\label{eq:Migdal_partition_function}
  Z_{\text{lattice}} = \sum_{\alpha\in\mathfrak{R}} (\dim\alpha)^{2-2g} \; e^{-\g\,\rho(\Sigma)\,c_{2}(\alpha)} \;.
\end{align}
For convenience and without loss of generality, in the rest of this paper we will adimensionalize the Yang--Mills coupling $\epsilon$ by working in units where $\rho(\Sigma) = 1$.

\paragraph{The supersymmetric theory and localization.}
2dYM admits an equivalent formulation with a fermionic symmetry. Let $\psi$ be a Grassmann-valued and adjoint-valued one-form field. The action 
\begin{align}\label{eq:cohomological_Yang_Mills_action}
  S_{\text{SUSY}}[A,\psi,\phi] = -\frac{1}{4\pi^2}\int\tr\Big(\i\phi F + \frac{1}{2}\,\psi\wedge\psi\Big) -\frac{\g}{8\pi^4}\int \vf\,\tr(\phi^{2})
\end{align}
yields a theory which is equivalent to 2dYM after integrating out $\phi$ and $\psi$, and hence yields the same partition function.
$S_{\text{SUSY}}$ is also invariant under the following odd symmetry transformation:
\begin{align}\label{EQ:SUSY_transformations}
  \delta A &= \i\psi \;, &
  \delta\psi &= -\d_{A}\phi \;, &
  \delta\phi &= 0 \;.
\end{align}
The transformation $\delta$ squares to an infinitesimal gauge transformation with parameter $\phi$
\begin{align}
  \delta^{2} = -\i\,\delta_{\phi} \;.
\end{align}

In \cite{Witten:1992xu}, it was shown that the action \eqref{eq:cohomological_Yang_Mills_action} admits a $\delta$-exact deformation which can be used to localize the theory. The specific deformation used by Witten, which requires the introduction of two additional supermultiplets, was shown to localize the theory to the moduli space of solutions to the Yang--Mills equations on $\Sigma$. The connected components of this moduli space, which deserve the name instanton sectors, are labeled by $\Gamma$, the magnetic weight lattice of $G$, which is dual to $\Gamma^{*}$, the usual weight lattice of $G$, modulo the action of the Weyl group $\mathcal{W}$ \cite{Goddard:1976qe,Atiyah:1982fa}.\footnote{
  For a nice introduction, see the appendix of \cite{Kapustin:2005py}.}
The localization procedure used by Witten, which differs from standard localization computations, was given the name ``non-abelian localization''.\footnote{
  The difference is visible, for instance, in the fact that standard localization computations yield a moduli space which consists of solutions to \emph{first order} BPS equations.}

For $g\ge2$, the value of the partition function derived using non-abelian localization is, schematically,
\begin{equation}\label{eq:Witten_partition_function}
  Z = \sum_{\m\in\Gamma/\mathcal{W}} q_g(\m,\g) \; e^{-S(\m)} \;,
\end{equation}
where
\begin{align}
  S(\m) \propto \frac{\langle\m,\m\rangle}{\g}
\end{align}
is the value of the action \eqref{eq:Yang_Mills_action} on any solution of the Yang--Mills equations in the component labeled by $\m$, and $q_g(\m,\g)$ is a function which was not explicitly computed in \cite{Witten:1992xu}.
When $\mathcal{M}_{\text{flat}}$ is smooth, the part of \eqref{eq:Witten_partition_function} which is not exponentially suppressed in the limit $\g\rightarrow0$ is in fact a polynomial in $\g$, which we call $p_g(\g)$. Witten argued that $p_g(\g)$ contains all the information required in order to compute the intersection theory on $\mathcal{M}_{\text{flat}}$. 

Recall that the set of irreducible representations of $G$, denoted by $\mathfrak{R}$, is equivalent to the weight lattice of $G$, which we denoted by $\Gamma^*$, modulo the action of the Weyl group.
For $G=\mathrm{SU}(2)$, $\Gamma$ and $\Gamma^*$ are one dimensional, and the value of the partition function as given in \eqref{eq:Migdal_partition_function} can be written simply as\footnote{
  Precisely, \eqref{eq:Migdal_partition_function_su2} is obtained from \eqref{eq:Migdal_partition_function} by adding a customary constant term to the Hamiltonian (cf.\ Eq.~(4.39) in \cite{Witten:1992xu}).
}
\begin{equation}\label{eq:Migdal_partition_function_su2}
  Z = \sum_{\ell=1}^{\infty} \ell^{2-2g} \; e^{-\g\ell^2} \;,
\end{equation}
while the version in \eqref{eq:Witten_partition_function} can be written as
\begin{equation}\label{eq:Witten_partition_function_su2}
  Z = \sum_{\m=0}^{\infty} q_g(\m,\g) \; e^{-\pi^{2}\m^{2}/\g} \;.
\end{equation}

Witten argued in \cite{Witten:1992xu} that the expressions \eqref{eq:Migdal_partition_function_su2} and \eqref{eq:Witten_partition_function_su2} are equal for all values of $\g$ and $g$. However, the precise matching is only studied in the sector where $\m$ vanishes.
We will, instead, investigate the functions $q_g(\m,\g)$ for all values of $\m$ and demonstrate explicitly the equality of these expressions up to an overall factor. Furthermore, we will show that these functions admit asymptotic expansions in $\g$, which can be interpreted as determining the formal perturbation series around each instanton.

\section{\boldmath The \texorpdfstring{$\mathrm{SU}(2)$}{SU(2)} partition function}\label{SEC:Z_from_poisson_summation}
\subsection{From Poisson summation}
In this section, we will use a generalized version of the Poisson summation formula to recast the $\mathrm{SU}(2)$ partition function \eqref{eq:Migdal_partition_function_su2}, expressed as a sum over representations of the gauge group, into a sum over the critical points of the Yang--Mills action, as in \eqref{eq:Witten_partition_function_su2}.\footnote{
  In \cite{Fujimori:2022qij}, the sum over instantons is studied through resurgence. However, the results therein, although similar in form, differ from the ones presented here and fail to fully reproduce the vacuum sector computed in \cite{Witten:1992xu, Ghoshal:2014kpa}. 
}
For non-negative Euler characteristics we can regard the argument of the sum in \eqref{eq:Migdal_partition_function_su2} as the even function
\begin{align}\label{EQ:f_g<=1}
  f(x) =x^{2-2g} \; e^{-{\g} x^2}
\end{align}
belonging to the Schwarz space $\mathcal{S}(\mathbb{R})$. Now, denoting with
\begin{align}
  \hat{f}(p) = \int_{-\infty}^{+\infty}\d x \; e^{-2\pi\i px} \, f(x)
\end{align}
its Fourier transform, one can employ the standard Poisson summation to write\footnote{In the last step of \eqref{EQ:poisson} we used the fact that the Fourier transform of an even function is itself an even function.}
\begin{align}\label{EQ:poisson}
  \sum_{\ell\geq1} f(\ell)
  &= - \frac{1}{2}\, f(0) + \frac{1}{2} \sum_{x\in\mathbb{Z}} f(x) \cr
  &= - \frac{1}{2}\, f(0) + \frac{1}{2} \sum_{p\in\mathbb{Z}} \hat{f}(p) \cr
  &= - \frac{1}{2}\, f(0) + \frac{1}{2}\, \hat{f}(0) + \sum_{\m\geq1} \hat{f}(\m) \;.
\end{align}
This gives the genus-zero and genus-one partition functions
\begin{align}
  Z\big|_{g=0} &= \label{EQ:Z_0}
    \frac{\sqrt{\pi}}{4}\g^{-3/2} + \sum_{\m=1}^{\infty} e^{-\pi^2\m^2/\g} \, \bigg(\frac{\sqrt{\pi}}{2}\,\g^{-3/2} - \pi^{5/2}\g^{-5/2}\m^2\bigg) \;, \\
  Z\big|_{g=1} &= \label{EQ:Z_1}
    \frac{\sqrt{\pi}}{2}\g^{-1/2} - \frac{1}{2} + \sum_{\m=1}^{\infty} e^{-\pi^2\m^2/\g} \, \bigg(\sqrt{\pi} \, \g^{-1/2}\bigg) \;.
\end{align}

As mentioned in the previous section, from \eqref{EQ:Z_0} and \eqref{EQ:Z_1} one can observe exponentially suppressed terms controlled by the classical action evaluated on the critical points of the action \eqref{eq:Yang_Mills_action}. In fact, for the gauge group $\mathrm{SU}(2)$, a covariantly constant $F$ reduces the structure group to $\mathrm{U}(1)$, splitting the $\mathrm{SU}(2)$ bundle into the sum of two line bundles. A representative of the conjugacy class of $F$ is given by
\begin{align}\label{EQ:critical_point}
  \star F = 2\pi\i\m\,\sigma_3 \;,
\end{align}
where $\sigma_3$ denotes the third Pauli matrix.
Configurations with opposite signs of $\m$ are gauge equivalent, as they are connected by a Weyl reflection. In labelling Yang--Mills solutions modulo gauge transformation, we restrict $\m$ to non-negative values.
The value of the action on the critical point \eqref{EQ:critical_point} is exactly $S=\pi^2\m^2/\g$. Moreover, from \eqref{EQ:Z_0} and \eqref{EQ:Z_1} one can directly read off the $q_0(\m,\g)$ and $q_1(\m,\g)$ polynomials in \eqref{eq:Witten_partition_function_su2}.

When $g\geq2$, the standard Poisson summation formula fails due to the fact that $f(x)$ diverges at $x=0$, and its Fourier transform does not exist in the usual sense. To proceed, we have to resort to a distributional extension of the Poisson summation formula proven in \cite{DURAN1998581}, where the singular points are excluded from the sum. Since $f(x)$ has a power-like divergence at $x=0$, and it is of rapid decay, it can be viewed as an element of the space $\mathcal{S}'(\mathbb{R})$ of tempered distribution. According to \cite{DURAN1998581}, we can rewrite the identity \eqref{EQ:poisson} as 
\begin{align}\label{EQ:generalized_Poisson}
  \sum_{x\in\mathbb{Z}\setminus\{0\}} f(x) = \Clim_{\Lambda\to\infty}\Bigg(\sum_{|p|\leq\Lambda} \hat{f}(p) - \int_{\raisebox{-4pt}{\scriptsize$|p|{\leq}\Lambda$}} \!\! \d p \; \hat{f}(p) \Bigg) \;,
\end{align}
where the limit is taken in the Ces\`aro sense.\footnote{
  Let $a\in\mathbb{R}\setminus\{-1,-2,-3,\dots\}$. We say that $f(x) = o(x^a)$ in the Cesàro sense for $x\to\infty$ if there exists $n\in\mathbb{N}$, a primitive $F(x)$ of order $n$ of $f(x)$, and a polynomial $p(x)$ of degree $n-1$ such that $F(x)$ is locally integrable for large $x$ and $$F(x) = p(x) + O(x^{n+a})$$ as $x\to\infty$. Then, we say that $$\Clim_{x\to\infty} f(x) = L$$ if $f(x) - L = o(1)$ in the Cesàro sense for $x\to\infty$. Obviously, if something converges or has a definite limit in the usual sense, it also converges or possesses a well-definite limit in the Cesàro sense, and the two coincide.
  
  For a more detailed definition and discussion, we refer to \cite{DURAN1998581} and references therein.
}

In our case, $f(x)$ can be regarded as the tempered distribution\footnote{
  To be more precise, in Theorem~1 of \cite{DURAN1998581}, \eqref{EQ:generalized_Poisson} is shown to hold in a subset of $\mathcal{S}'(\mathbb{R})$ to which \eqref{EQ:f} belongs. This is the space $\mathcal{K}+\hat{\mathcal{K}}$, where $\mathcal{K}$ is the space of smooth function for which there exist a $q\in\mathbb{R}$ such that
  $$\frac{\partial^k}{\partial x^k} f(x)=O(|x|^{q-k})$$
  as $|x|\to\infty$, while $\hat{\mathcal{K}}$ is the space of their Fourier transforms, which exist in the Ces\`aro sense. The elements of $\hat{\mathcal{K}}$ are smooth functions in $\mathbb{R}\setminus\{0\}$ rapidly decaying at infinity. On the other hand, because of the singular behavior in the origin, a generic $f \in \mathcal{K}+\hat{\mathcal{K}}$ is a tempered distribution rather than a function.

  For a version of the formula that applies to any distribution in $\mathcal{S}'(\mathbb{R})$, see Theorem~2 of \cite{DURAN1998581}.
}
\begin{align}\label{EQ:f}
  f(x)
  &= -\frac{e^{-\g x^2}}{(2 g-3)!} \, \frac{\d^{2g-3}}{\d x^{2 g-3}} \left(\pv \frac{1}{x}\right) \;, \qquad \text{where $g\geq2$} \;.
\end{align}
Its Fourier transform has been computed in the distributional sense in Appendix~\ref{APP:Fourier} and can be written as the sum of two contributions,
\begin{align}\label{EQ:fhat_pre}
  \hat{f}(p) = Q'(p) + H(p) \;,
\end{align}
where\footnote{
  The \emph{complementary error function} is defined as $$\erfc(x) = 1 - \erf(x) \;.$$}
\begin{align}\label{EQ:H}
  H(p) = e^{-\pi^2p^2/\g} \, P(p) - \erfc(\pi p/\sqrt{\g}) \, Q'(p) \;.
\end{align}
The two coefficients $P(p)$ and $Q(p)$ are even polynomials whose definition can be found in \eqref{EQ:P} and \eqref{EQ:Q}. Since $\hat{f}$ is an even function, we can also write
\begin{align}\label{EQ:fhat}
  \hat{f}(p)
  &= Q'(|p|) + H(|p|) \;.
\end{align}
The benefit of having split $\hat{f}$ as in \eqref{EQ:fhat} is that we can apply the limiting procedure to the two terms separately. Since $H(|p|)$ is exponentially suppressed\footnote{
  We have, in particular, that $\erfc(z)\sim e^{-z^2}$ for $z\to\infty$.}
for $p\to\pm\infty$, the associated sum and integral converge separately as $\Lambda\to\infty$. In other words, the only nontrivial Ces\`aro limits originate from the $Q'(|p|)$ term. We can then rewrite the partition function as the sum
\begin{align}\label{EQ:Zg_from_RSI}
  Z
  &= \frac{1}{2}\left(R + I + S\right)
\end{align}
of the three finite terms
\begin{align}
  R \label{EQ:R}
  &= \Clim_{\Lambda\to\infty}\Bigg(\sum_{|p|\leq\Lambda} Q'(|p|) - \int_{\raisebox{-4pt}{\scriptsize$|p|{\leq}\Lambda$}} \!\! \d p \; Q'(|p|) \Bigg) \;, \\
  I \label{EQ:I}
  &= - \int_{-\infty}^{+\infty} \d p \; H(|p|) \;,
  \vphantom{\sum_{|p|\leq\Lambda}} \\
  S \label{EQ:S}
  &= \sum_{p\in\mathbb{Z}} H(|p|)
  \vphantom{\int^{\infty}} \;.
\end{align}
Let us now deal with them one by one.

We start by considering $R$. One can avoid performing computations at finite $\Lambda$ by simply applying again the extended Poisson summation, but this time in the opposite direction. Exploiting the Fourier transform computed in Appendix~\ref{APP:Fourier}, we get
\begin{align}\label{EQ:generalized_Poisson_monomial}
  \Clim_{\Lambda\to\infty}\Bigg(\sum_{|p|\leq\Lambda} |p|^{2n+1} - \int_{\raisebox{-4pt}{\scriptsize$|p|{\leq}\Lambda$}} \!\! \d p \; |p|^{2n+1} \Bigg)
  &= \frac{(2n+1)!}{2^{2n+1}\,(-\pi^2)^{n+1}} \sum_{\ell\in\mathbb{Z}\setminus\{0\}} \frac{1}{\ell^{2n+2}} \cr
  &= \frac{(2n+1)!}{2^{2n}\,(-\pi^2)^{n+1}} \; \zeta(2n+2) \;.
\end{align}
By linearity, we can simply apply the above to each monomial in $Q'$ to find
\begin{align}\label{EQ:R_result}
  R
  &= 2 \sum_{j=0}^{g-2} \frac{(-1)^j}{j!} \, \zeta(2g-2-2j) \; \g^j \;.
\end{align}
Since the coefficients depend only on  the Riemann zeta function evaluated on even integers, we can alternatively express them in terms of Bernoulli numbers (see e.g.\ \cite{Witten:1992xu}).

The next step is to compute $I$. Plugging \eqref{EQ:H} into \eqref{EQ:I} and integrating the second term in \eqref{EQ:H} by parts, we find
\begin{align}\label{EQ:R_computation}
  I
  &= 2\bigg[\erfc(\pi p/\sqrt{\g}) \, Q(p)\,\bigg]_0^\infty - 2\int_0^\infty \d p \; e^{-\pi^2p^2/\g} \bigg(P(p) - \frac{2\sqrt{\pi}}{\sqrt{\g}} \, Q(p) \bigg) \;.
\end{align}
The residual integral in \eqref{EQ:R_computation} can be directly computed and shown to vanish identically. The boundary contribution, instead, yields 
\begin{align}
  I
  &= -2 \, Q(0) \cr
  &= -\frac{(-1)^{g-1}\,\g^{g-1}}{(g-1)!} \;.
\end{align}
This contribution amounts to the term with $j=g-1$ missing from the summation range in $R$, as given in \eqref{EQ:R_result}.

Lastly, we need to compute $S$. Let us rewrite \eqref{EQ:S} as
\begin{align}
  S &= P(0) + 2\sum_{\m=1}^{\infty} H(\m) \;.
\end{align}
One can easily compute
\begin{align}
  P(0)
  = \frac{(-1)^{g-1}\sqrt{\pi}\,2^{g-1}}{(2g-3)!!}\,\g^{g-3/2}
\end{align}
by taking only the $k=g-2$ term in \eqref{EQ:P} and performing the sum over $m$.
To rewrite the  remaining term as a sum over the critical points with a non-vanishing magnetic flux, we use the asymptotic expansion of the complementary error function
\begin{align}
  \erfc(1/z)
  &= e^{-1/z^2} \frac{z}{\sqrt{\pi}} \sum_{m=0}^\infty (-z^2/2)^{m}\,(2m-1)!! \;,
\end{align}
valid for $|\arg(z)|<\frac{3}{4}$. This gives
\begin{align}\label{EQ:H_asymptotic_expansion}
  H(|p|)
  &= e^{-\pi^2p^2/\g} \sum_{k=0}^{\infty} \frac{\sqrt{\pi}\,(2g-2)_{2k}}{2^{2k}\,(-\pi^2p^2)^{g-1+k}\,k!} \; \g^{2g-5/2+k} \;.
\end{align}

We can now add up all the different contributions as in \eqref{EQ:Zg_from_RSI} to obtain the complete partition function of $\mathrm{SU}(2)$ Yang--Mills theory at genus $g\geq 2$:
\begin{align}\label{EQ:Zg_g≥2}
  Z
  = {}&\sum_{j=0}^{g-1} \frac{(-1)^j}{j!} \, \zeta(2g-2-2j) \; \g^{j} + \frac{(-1)^{g-1}\sqrt{\pi}\,2^{g-2}}{(2g-3)!!}\,\g^{g-3/2} \cr
  &+ \sum_{\m=1}^{\infty} e^{-\pi^2\m^2/\g} \, \sum_{k=0}^{\infty} \frac{\sqrt{\pi}\,(2g-2)_{2k}}{2^{2k}\,(-\pi^2\m^2)^{g-1+k}\,k!} \; \g^{2g-5/2+k}  \;.
\end{align}
The terms in the first line stem from flat connections wrapping the $2g$ non-trivial cycles of $\Sigma$. The polynomial part evaluates the intersection numbers of the moduli space of flat connections (once the partition function has been properly normalized), while the non-analytic part with its exponent $g-\frac{3}{2}$ originates from the singular nature of the moduli space of flat $\mathrm{SU}(2)$ connections. This, in turn, is a consequence of the that fact that $\mathrm{SU}(2)$ has a nontrivial center and thus the action of the $\mathrm{SU}(2)$ gauge group on the space of flat connections is not free.
The terms in the second line of \eqref{EQ:Zg_g≥2} are governed once again by the exponential of the classical action evaluated at the critical points. At variance with the $g\leq1$ case, the fluctuations around the instanton solutions become (infinite) asymptotic series rather than polynomials in $1/\sqrt{\g}$.

\subsection{From an effective action}
In \eqref{EQ:Zg_g≥2}, we have isolated the contribution to the partition function associated with each instanton sector. Later in this paper, we will deal with the problem of deriving the same result from a localization computation.
Before closing this section, however, we want to illustrate how the analysis performed so far already contains some insight on the form of the effective action that should arise from the localization procedure.

As mentioned, we can recognize in $H(|\m|)$ the contribution of a single instanton with flux number $\m$.
From the perspective of a localization computation, we expect this to arise from the integral over a bosonic moduli (i.e.\ the \textit{localizing saddle}) corresponding to some supersymmetric field configurations, the theory being formulated in terms of an action with a fermionic symmetry. The minimal supersymmetric extension for 2dYM comprises the degrees of freedom that appear in the classical action \eqref{eq:cohomological_Yang_Mills_action}. The associated supersymmetric variations \eqref{EQ:SUSY_transformations} indicate that a putative effective action should depend, not just on $\m$, but also on a constant mode for the scalar $\phi$ in the Cartan subalgebra of $\mathrm{SU}(2)$. For simplicity, for the rest of this section, we will denote this single real degree of freedom with $\phi_0$, where $\phi = 2\pi^2\i\phi_0\,\sigma_3$.

From the discussion above, we expect the function $H$ to be generated by an integral of the type
\begin{align}\label{EQ:H_from_localization}
  H(|\m|) = \int_{-\infty}^{+\infty} \d\phi_0 \; e^{-S_{\text{cl.}}(\phi_0,\m)} \; \Delta^{\vphantom{\text{r}}}_g(\phi_0,\m) \;,
\end{align}
where
\begin{align}\label{EQ:S_cl}
  S_{\text{cl.}}(\phi_0,\m) = \g\phi_0{}^2 + 2\pi\i\m\phi_0
\end{align}
is the bosonic part of the supersymmetric action \eqref{eq:cohomological_Yang_Mills_action} evaluated on the bosonic moduli, while $\Delta^{\vphantom{\text{r}}}_g$ is the one-loop determinant obtained by integrating quantum fluctuations about each point in the moduli.

Interestingly, the results of this section hint at a distributional nature for $\Delta_g$. In fact, the integral \eqref{EQ:H_from_localization} closely resembles the Fourier transform computed in Appendix~\ref{APP:Fourier}. Specifically, \eqref{EQ:fhat_pre} suggests that we should write
\begin{align}
  \Delta^{\vphantom{\text{r}}}_g(\phi_0,\m) = e^{\g\phi_0{}^2} f(\phi_0) + \Delta^{\text{res}}_g(\phi_0,\m) \;,
\end{align}
where
\begin{align}\label{EQ:Delta_res}
  \int_{-\infty}^{+\infty} \d\phi_0 \; e^{-2\pi\i\m\phi_0-\g\phi_0{}^2} \, \Delta^{\text{res}}_g(\phi_0,\m) = - Q'(|\m|) \;.
\end{align}
From the representation \eqref{EQ:Q'} of $Q'$ in terms of a Hermite polynomial, we obtain\footnote{
  In the first line of \eqref{EQ:Q'_as_Fourier_trasnform}, we have made use of the expression for the generating function of the Hermite polynomials 
  $$e^{2xt-t^2} = \sum_{n=0}^\infty H_n(x) \, \frac{t^n}{n!} \;.$$
}
\begin{align}\label{EQ:Q'_as_Fourier_trasnform}
  -Q'(\m)
  &= \frac{\i\pi}{(2g-3)!} \, \frac{\d^{2g-3}}{\d\phi_0{}^{2g-3}} \, e^{-2\pi\i\m\phi_0-\g\phi_0{}^2}\Big|_{\phi_0=0} \cr
  &= \frac{\i\pi}{(2g-3)!} \int_{-\infty}^{+\infty} \d\phi_0 \; \delta(\phi_0) \; \frac{\d^{2g-3}}{\d\phi_0{}^{2g-3}} \, e^{-2\pi\i\m\phi_0-\g\phi_0{}^2} \cr
  &= -\frac{\i\pi}{(2g-3)!} \int_{-\infty}^{+\infty} \d\phi_0 \; e^{-2\pi\i\m\phi_0-\g\phi_0{}^2} \; \frac{\d^{2g-3}}{\d\phi_0{}^{2g-3}}\,\delta(\phi_0) \;.
\end{align}
By comparing \eqref{EQ:Delta_res} with \eqref{EQ:Q'_as_Fourier_trasnform}, we can fix $\Delta^{\text{res}}_g$. Interestingly, the dependence of $\Delta^{\vphantom{\text{r}}}_g$ on $\m$ turns out to be very mild. In particular, it takes the suggestive form
\begin{align}\label{EQ:Delta_g}
  \Delta^{\vphantom{\text{r}}}_g(\phi_0,\m)
  &= -\frac{1}{(2 g-3)!} \, \frac{\d^{2g-3}}{\d\phi_0{}^{2 g-3}} \left(\pv\frac{1}{\phi_0} + \i\pi\,\sign(\m)\,\delta(\phi_0)\right) \;.
\end{align}

In the next section, we will explain the origin of the one-loop determinant \eqref{EQ:Delta_g} in terms of a localization computation. For the moment, we just want to point out that a one-loop determinant given by a tempered distribution is somewhat unusual in the context of supersymmetric localization. However, in Section~\ref{SEC:toy_model} we will show through a finite-dimensional toy model that this occurs naturally when adopting the particular localization scheme employed in the present paper.

\section{Localization}\label{SEC:localization}

In this section, we compute the contribution to the partition function of 2dYM on a closed surface from each instanton sector using a stationary phase version of supersymmetric localization.
We will follow the approach of Witten \cite{Witten:1992xu}, but work only with the basic multiplet, comprising $A$, $\psi$ and $\phi$. We will not introduce the additional fields forming the $\mathcal{N}=(2,2)$ vector multiplet which were used to localize more general A-twisted models in e.g.\ \cite{Benini:2016hjo}.
Our primary contribution will be a careful reexamination of the effective action for the bosonic moduli, which will yield a reasonable result for the perturbation series of 2dYM. 

We begin with a brief review of non-abelian localization, following \cite{Witten:1992xu}.
We then carry out stationary phase localization in the toy example of non-abelian localization presented in \cite{Witten:1992xu}.
Finally, we carry out the complete computation for 2dYM with gauge group $\mathrm{SU}(2)$.
The computation uses results for Yang--Mills connections collected in Appendix \ref{sec:Yang-Mills-connections}.
The contributions to the partition function match the all-orders perturbative expansion presented in Section \ref{SEC:Z_from_poisson_summation}, up to a factor which is independent of the coupling constant.

\subsection{Non-abelian localization}\label{sec:Non-abelian-localization}
In \cite{Witten:1992xu}, a localization formula is introduced, which is a non-abelian generalization of the Duistermaat--Heckman localization formula in symplectic geometry.

Let $G$ be a Lie group acting on a manifold $\mathscr{M}$. We denote with $\mathfrak{g}$ its Lie algebra and with $V(\phi)$ the fundamental vector field generated by $\phi\in\mathfrak{g}$. The equivariant de Rahm complex $(\Omega^\bullet_G(\mathscr{M}), \mathrm{D})$ is the sequence of smooth $G$-equivariant differential forms\footnote{
  A smooth $G$-equivariant differential form is a smooth map $\alpha:\mathfrak{g}\to\Omega^{\bullet}(\mathscr{M})$ such that the following diagram commutes for any $g\in G$:
  $$
  \begin{tikzcd}[ampersand replacement=\&]
    \mathfrak{g} \arrow[r, "\alpha"] \arrow[d, "\mathrm{Ad}_g"'] \& \Omega^{\bullet}(\mathscr{M}) \arrow[d, "g"] \\
    \mathfrak{g} \arrow[r, "\alpha"']                            \& \Omega^{\bullet}(\mathscr{M})               
  \end{tikzcd}
  $$
} on $\mathscr{M}$, together with the nilpotent twisted differential $\mathrm{D}:\Omega^\bullet_G(\mathscr{M})\to\Omega^\bullet_G(\mathscr{M})$ defined as
\begin{align}
  (\mathrm{D}\alpha)(\phi) = \d\alpha(\phi) - \i \, i_{V(\phi)} \alpha(\phi) \;,
\end{align}
for $\alpha\in\Omega^\bullet_G(\mathscr{M})$. The equivariant de Rahm complex induces the equivariant cohomology group $H^\bullet_G(\mathscr{M})$.

The localization formula of \cite{Witten:1992xu} deals with the equivariant integral
\begin{align}\label{EQ:equivariant_integral}
  \oint : \Omega^\bullet_G(\mathscr{M}) \longrightarrow \mathbb{C}
\end{align}
defined as
\begin{align}
  \oint \alpha = \frac{1}{\vol G}\int_{\mathfrak{g}}\d\phi \; e^{-\g\langle\phi,\phi\rangle} \int_{\mathscr{M}} \alpha(\phi) \;,
\end{align}
where $\d\phi$ is the volume form induced by the Killing form $\langle\cdot,\cdot\rangle$ on $\mathfrak{g}$. The exponential term with $\g>0$ ensures the convergence of the integral. Since the equivariant integral of an equivariantly exact form vanishes, \eqref{EQ:equivariant_integral} descends to a map $H^\bullet_G(\mathscr{M}) \to \mathbb{C}$.

The localization scheme presented in \cite{Witten:1992xu} relies on the introduction of the localizing term $e^{-t\mathrm{D}\lambda}$, where $\lambda\in\Omega^1_G(\mathscr{M})$.\footnote{
  To ensure convergence, it is sufficient to choose a $G$-invariant $\lambda$.}
In fact, if $\alpha$ is equivariantly closed,
\begin{align}
  \oint \alpha = \oint \alpha \; e^{-t\mathrm{D}\lambda} \;, 
\end{align}
since both integrands belong to the same cohomology class in $H^\bullet_G(\mathscr{M})$. Now the exponential brings a $e^{\i tK}$ factor, with $K = i_{V(\phi)} \lambda(\phi)$. In the limit where $t\to\infty$, one can use the method of the stationary phase to prove that the integral can be expressed as a set of local contributions with
\begin{align}\label{EQ:localization_formula}
  \oint \alpha = \sum_{\sigma} Z_\sigma \;,
\end{align}
where $\sigma$ labels the critical sets $\mathscr{M}_\sigma$, i.e.\ the components of the subspace of $\mathscr{M}$ where $\d K = 0$.

This localization scheme is applied to the context of symplectic geometry: suppose $\mathscr{M}$ is a symplectic manifold with symplectic form $\omega$ while $G$ acts as a symplectomorphism with momentum map $\mu$, i.e.\ $\mu:\mathscr{M}\to\mathfrak{g}^*$ is such that $i_{V(\phi)}\omega = -\d\mu(\phi)$.
Let us denote $I = \langle\mu,\mu\rangle$. We choose
\begin{align}\label{EQ:lambda_symplectic}
  \lambda = \frac{1}{2}(\d I)J \;,
\end{align}
where $J$ is an almost complex structure $J$ compatible with $\omega$, namely
\begin{align}
  g(\cdot, \cdot) = \omega(\cdot, J(\cdot)) \;.
\end{align}
Because, as shown in \cite{Witten:1992xu}, the critical points of $K$ coincide with the critical points of $I$, in this localization scheme, the integral localizes on the stationary points of $\langle\mu,\mu\rangle$. Clearly, the subspace $\mu^{-1}(0)$ is the absolute minimum of such a function.

We still have to specify which form we wish to integrate: this is the unique equivariant extension of the symplectic form that is equivariantly closed, namely
\begin{align}
  \xi(\phi) = \omega - \i\mu(\phi) \;.
\end{align}
In \cite{Witten:1992xu} it is shown that, when restricted to a tubular neighborhood $\mathscr{N}_\sigma$ of $\mathscr{M}_\sigma$, the equivariant integral of $e^\xi$ is proportional to
\begin{align}
  \int_{\mathscr{N}_\sigma} e^{\omega + t\d\lambda + W/(2\g)} \;.
\end{align}
Here, $W$ is a scalar that has a local minimum at $\mathscr{M}_\sigma$, where $W=I_\sigma$, the constant value of $I$ over $\mathscr{M}_\sigma$. As a consequence, in the large-$t$ limit, we find that each localized contribution to the equivariant integral is exponentially suppressed as
\begin{align}\label{EQ:higher_critical_points_asymptotics}
  Z_\sigma \sim e^{-I_\sigma/(2\g)} \;,
\end{align}
with the only exception being the global minimum $\mu^{-1}(0)$ where $I$ vanishes. The same conclusions apply straightforwardly to more general integrals of the form
\begin{align}
  \oint \beta \, e^{\xi} \;,
\end{align}
where $\beta$ has at most a polynomial dependence on $\phi$.

\subsection{An abelian toy model}\label{SEC:toy_model}
In the appendix of \cite{Witten:1992xu}, the following example is considered. Let $\mathscr{M}\simeq S^2$ with the usual symplectic volume form $\omega = \sin\theta \, \d\theta\wedge\d\psi$, where $0\leq\theta\leq\pi$ and $0\leq\psi<2\pi$ are the conventional spherical coordinates. We then consider the $\mathrm{U}(1)$ group action generating azimuthal translations. By endowing the Lie algebra $\mathfrak{u}(1)$ with the standard metric such that $\vol \mathrm{U}(1) = 2\pi$, we identify $\mathfrak{u}(1)^* \simeq \mathfrak{u}(1) \simeq \mathbb{R}$, the unit generator on $\mathfrak{u}(1)$ being associated with the fundamental vector field $V = \partial_\psi$. Such a group action on $(S^2,\omega)$ is a symplectomorphism and the associated moment map reads $\mu = -\cos\theta$, up to constant terms. In this setup, the equivariant integral amenable to localization reads 
\begin{align}\label{EQ:Z_toy_model}
  Z = \frac{1}{2\pi} \int_{-\infty}^\infty\frac{\d\phi}{2\pi} \int_{S^2} e^{\omega - \i\phi\mu - \g\phi^2/2} \;.
\end{align}

We now want to revisit this example and show in detail how to explicitly compute the contributions of each critical point of $I$ to the integral. We start by computing the localizing term
\begin{align}\label{EQ:eta_localizing}
  \eta(\phi)
  &= e^{-t\mathrm{D}\lambda} \cr
  &= e^{-t\,\d\lambda+\i t\phi\;i_V\!\lambda} \;,
\end{align}
where
\begin{align}
  \lambda &= \sin^2\!\theta\,\cos\theta\;\d\psi
\end{align}
is induced by the almost complex structure on $S^2$,
\begin{align}
  J = \frac{1}{\sin\theta} \, \partial_\psi\otimes\d\theta - \sin\theta \, \partial_\theta\otimes\d\psi \;,
\end{align}
as in \eqref{EQ:lambda_symplectic}.
For convenience, we rewrite it as $\eta = e^{\eta_0 + \eta_2}$, where
\begin{align}
  \eta_0 &= \i t\phi\,\sin^2\!\theta\,\cos\theta \;, \cr
  \eta_2 &= -t \, \partial_\theta(\sin^2\!\theta\,\cos\theta)\,\d\theta\wedge\d\psi \;.
\end{align}
The integral
\begin{align}\label{EQ:deformed_integral_toy_model}
  \frac{1}{2\pi}\int_{S^2} \eta(\phi) \; e^{\omega + \i\phi\cos\theta - \g\phi^2/2}
\end{align}
should receive contributions from the critical points of $I = \cos^2\theta$, i.e.\ from the minimum at $\theta = \pi/2$ and from the two maxima at $\theta = 0$ and $\theta = \pi$.

In a tubular neighborhood of the circle at $\theta = \pi/2$,
\begin{align}
  \eta_0\big|_{\theta\sim\pi/2} &\sim -\i t\phi\,z \;, \\
  \eta_2\big|_{\theta\sim\pi/2} &\sim t\,\d z\wedge\d\psi \;,
\end{align}
where $z = \theta -\pi/2$. As the integral \eqref{EQ:deformed_integral_toy_model} localizes for $t\to\infty$, the contribution of the fluctuations about the minimum at $\theta = \pi/2$ is captured by an integral over the normal bundle, where we have expanded at the leading order as above:
\begin{align}
  \Delta_{\frac{\pi}{2}\!}(\phi)
  &= t \int_0^{2\pi} \frac{\d\psi}{2\pi} \int_{-\infty}^\infty \d z \; e^{-\i t\phi z} \;.
\end{align}
We interpret the integral over $z$ as the Fourier transform of a constant tempered distribution and write
\begin{align}
  \Delta_{\frac{\pi}{2}\!}(\phi)
  &= 2\pi\,\delta(\phi) \;.
\end{align}

We are left with the contribution coming from the higher critical points. The chart $(\theta, \psi)$ is singular on the maxima, so in a neighborhood of $\theta = 0$ we define $x = \theta \cos\psi$ and $y = \theta \sin\psi$, in terms of which
\begin{align}
  \eta_0\big|_{\theta\sim0} &\sim \i t\phi\,(x^2+y^2) \;, \\
  \eta_2\big|_{\theta\sim0} &\sim -2t \, \d x\wedge\d y \;.
\end{align}
The fluctuation integral over the tangent space at $\theta = 0$ reads
\begin{align}\label{EQ:fluctuation_integral_maximum}
  \Delta\mathmakebox[\widthof{${}_{\frac{\pi}{2}\!}$}][c]{{}_0}(\phi) &= -\frac{t}{\pi} \int_{\mathbb{R}^2} \d x \, \d y \; e^{\i t\phi\,(x^2+y^2)} \;.
\end{align}
This converges to $-\i/\phi$ in the half-plane $\Im \phi>0$ (given $t>0$). In order to obtain a sensible result for real $\phi$ we must interpret again the integral in a distributional sense. This is achieved by taking the limit for $\Im\phi\to0^+$ and using the distributional identity
\begin{align}\label{EQ:pv_as_lim}
  \lim_{\varepsilon\to0^+} \frac{1}{x\pm\i\varepsilon} = \pv \frac{1}{x} \mp\i\pi\,\delta(x) \;.
\end{align}
In fact, upon switching to polar coordinates one recognizes the expression for the Fourier transform of the Heaviside step function.\footnote{\label{FN:Heitler}
  In a distributional sense,
  \begin{align*}
    \int_{0}^{\infty}\d x \; e^{\mp2\pi\i px} = \pm\frac{1}{2\pi\i}\pv\frac{1}{p} + \frac{\delta(p)}{2} \;.
  \end{align*}
  The above goes also under the name of Heitler function.
} The result is the tempered distribution
\begin{align}
  \Delta_0(\phi)
  &= -\i\pv\frac{1}{\phi} - \pi\,\delta(\phi) \;.
\end{align}
We will not repeat the analysis for the other maximum at $\theta=\pi$ as it is completely analogous and simply amounts to replacing $\phi\mapsto-\phi$. This gives
\begin{align}
  \Delta_{\pi}(\phi) &= +\i\pv\frac{1}{\phi} - \pi\,\delta(\phi) \;.
\end{align}

According to \eqref{EQ:localization_formula}, we can write the integral \eqref{EQ:Z_toy_model} as
\begin{align}
  Z = Z_0 + Z_\frac{\pi}{2} + Z_\pi \;,
\end{align}
where
\begin{align}
  Z_\theta = \int_{-\infty}^{\infty} \frac{\d\phi}{2\pi} \; \Delta_\theta(\phi) \; e^{+\i\phi\cos\theta-\g\phi^2/2} \;.
\end{align}
A direct computation gives
\begin{align}
  Z\mathmakebox[\widthof{${}_{\frac{\pi}{2}\!}$}][c]{{}_0} &= -\frac{1}{2} \erfc\!\bigg(\frac{1}{\sqrt{2\g}}\bigg) \;, \\
  Z_\frac{\pi}{2} &=  1 \;, \\
  Z\mathmakebox[\widthof{${}_{\frac{\pi}{2}\!}$}][c]{{}_\pi} &= -\frac{1}{2} \erfc\!\bigg(\frac{1}{\sqrt{2\g}}\bigg) \;.
\end{align}
As expected, the contributions associated with the two maxima are exponentially suppressed as $\g\to0^+$. In fact, we find
\begin{align}
  Z_{0,\pi} \sim e^{-1/(2\g)} \;,
\end{align}
consistently with \eqref{EQ:higher_critical_points_asymptotics}. Combining the three terms produces the correct result
\begin{align}
  Z = \erf\!\bigg(\frac{1}{\sqrt{2\g}}\bigg) \;.
\end{align}

\subsection{The Yang--Mills effective action}

We will now show that a supersymmetric localization computation of the partition function of 2d Yang--Mills yields results for the perturbation series which are in line with those derived in previous Section~\ref{SEC:Z_from_poisson_summation}.
To begin, we provide in Table~\ref{tab:dictionary} a dictionary for translating the main components of non-abelian localization, as reviewed in Section~\ref{sec:Non-abelian-localization}, to the language of 2d supersymmetric gauge theory.
In this section we use the notation and results from Appendix~\ref{sec:Yang-Mills-connections}.
We will implicitly work only with $g\ge2$. The localization computation for $g=0$ and $g=1$ can be performed with standard methods.

\begin{table}
\begin{tabulary}{\linewidth}{CC}
  \toprule
  \textbf{Symplectic Geometry} & \textbf{Supersymmetric 2d Yang--Mills} \\
  \bottomrule
  the Lie group $\mathcal{G}$ &
  the gauge group: the automorphism group of a fixed principal $G$-bundle over $\Sigma$ \\\hline
  the smooth $\mathcal{G}$-manifold $X$ &
  the space of connections $\mathcal{A}$ on a fixed principal $G$-bundle over $\Sigma$ \\\hline
  the action of $\mathcal{G}$ on $X$ &
  gauge transformations acting on $\mathcal{A}$ \\\hline
  coordinates on the Lie algebra of $\mathcal{G}$ &
  the field $\phi$ as an adjoint-valued scalar on $\Sigma$ \\\hline
  coordinates $\{x\} $ on $X$ &
  connections $A\in\mathcal{A}$ \\\hline
  $\partial_{x}$ as an element of the tangent space $TX$ &
  $\delta A$ as an adjoint-valued one-form on $\Sigma$ \\\hline
  a one-form dual to $\delta A$ (as an element of the exterior algebra) &
  an adjoint-valued fermion $\psi$ which is also a one-form on $\Sigma$ \\\hline
  an equivariant differential form (in local coordinates) &
  a gauge invariant superfield $\Phi(A,\psi,\phi)$ \\\hline
  the exterior product &
  the product of superfields \\\hline
  equivariant integration of equivariant differential forms &
  supersymmetric path integral \\\hline
  the equivariant differential $D$ &
  the supersymmetry variation $\delta$ \\\hline
  the symplectic form $\omega$ &
  the Atiyah--Bott symplectic form $\omega_{\text{AB}}$ \\\hline
  equivariantly closed forms &
  $\delta$ closed superfields \\\hline
  equivariant cohomology of $X$ &
  the $\delta$ cohomology \\\hline
  the equivariant extension $\omega-\i\mu$ &
  the superfield $\frac{1}{2}\psi\wedge\psi+\i\phi F$ \\\bottomrule
\end{tabulary}
\caption{The dictionary for translating non-abelian localization in symplectic geometry into supersymmetric gauge theory.\label{tab:dictionary}}
\end{table}

\paragraph{Localization of 2dYM.}
As explained in Section~\ref{SEC:2dYM}, 2dYM has a completely equivalent formulation in terms of fields $A,\psi$, $\phi$, and the supersymmetric action \eqref{eq:cohomological_Yang_Mills_action} that we reproduce here for convenience:
\begin{align}\label{eq:cohomological_Yang_Mills_action_2}
  S_{\text{SUSY}}[A,\psi,\phi] = -\frac{1}{4\pi^2}\int\tr\Big(\i\phi F + \frac{1}{2}\,\psi\wedge\psi\Big) -\frac{\g}{8\pi^4}\int \vf\,\tr(\phi^{2}) \;.
\end{align}
The supersymmetry acting on the fields is generated by
\begin{align}
  \delta A &= \i\psi \;, &
  \delta\psi &= -\d_{A}\phi \;, &
  \delta\phi &= 0 \;.
\end{align}
Note that
\begin{align}
  \delta F &= \i\,\d_{A}\psi \;.
\end{align}

We would like to compute the partition function for this theory on a closed genus $g$ surface $\Sigma$ of unit area, using localization.
To begin, we deform the 2dYM action using a $\delta$-exact term.
Our choice of $\delta$-exact term appears in \cite{Witten:1992xu} at a certain stage in the computation (cf.\ Eq.\ (3.30) of \cite{Witten:1992xu}):\footnote{
  See \ref{sec:Yang-Mills-connections} for the precise definitions of $\d_{A}$ and $\hat{F}$.
}
\begin{align}\label{eq:localizing_action}
  S_{\text{loc.}}
  &= t\;\delta \bigg( \frac{1}{32\pi^{4}} \int_{\Sigma}\tr\bigg(\i\psi\wedge \d_{A}^{\dagger}F\bigg)\bigg) \cr
  &= \frac{t}{32\pi^{4}} \int_{\Sigma} \tr\bigg(\i\phi\wedge \d_{A}^{\vphantom{\dagger}}\d_{A}^{\dagger}F + \psi\wedge\bigg(\frac{1}{2}\Delta_{A}^{\vphantom{\dagger}}+\hat{F}\bigg)\psi\bigg) \;.
\end{align}
By a standard argument, the $\delta$-exact deformation does not change the expectation value of the partition function and of $\delta$-closed observables. The precise coefficient used here was chosen for later convenience.

The bosonic part of the action \eqref{eq:localizing_action} is purely imaginary, and the localization we have in mind is an exact stationary
phase approximation.\footnote{
  Note that this reality of the bosonic action would not be true in the $\mathcal{N}=(2,2)$ version of this theory, where $\phi$ is a \emph{complex} scalar.
}
This should be contrasted with the usual setup for supersymmetric localization, as expounded for instance in \cite{Schwarz:1995dg}, whereby a real positive-semi-definite localizing action is used and the result is an exact saddle point approximation.
Our first assumption is that such a localization is still well defined, an assumption which is borne out by the results in \cite{Witten:1992xu}, albeit without a derivation of the finite dimensional model.
Note also that this type of localizing term is the direct analogue of the one used in the toy example, namely \eqref{EQ:eta_localizing}.

According to \cite{Witten:1992xu}, the localizing action \eqref{eq:localizing_action} also brings in configurations from infinity, for any finite value of $t$, resulting in expressions for the expectation values of $\delta$-closed observables at finite $\g$ which are different from what one could have obtained via a standard localizing term leading to a cohomological TQFT.
The additional configurations are unstable instantons, exponentially suppressed at small $\g$, allowing them to be neatly separated
from the topological data at $\g\rightarrow0$.
This latter fact, which was central to the arguments leading to the computation of the intersection numbers on the moduli space of flat connections performed in \cite{Witten:1992xu}, is not relevant for our purposes.
We \emph{define} the trans-series type expansion in $\g$ which we wish to compute as the one arising from our 2dYM action at any
value of $t$.
Indeed, for the purpose of computing the partition function, the theory we consider is equivalent to \emph{physical} 2dYM at any value of $t$.

The equations of motion arising from \eqref{eq:localizing_action} imply that the stationary phase configurations satisfy
\begin{align}
  \d_{A}^{\vphantom{\dagger}}\d_{A}^{\dagger}F = 0 \;.
\end{align}
Using the inner product, one sees that this equation is equivalent to the Yang--Mills equation 
\begin{align}\label{eq:Yang-Mills_equation}
  \d_{A}^{\dagger}F = 0 \;.
\end{align}
Treating the fermions as perturbations, the equations of motion for $A$ are equivalent to 
\begin{align}
  \Delta_{A}\d_{A}\phi &= 0 \;,
\end{align}
which implies that $\d_{A}\phi$ is harmonic and therefore $\d_{A}^{\dagger}\d_{A}^{\vphantom{\dagger}}\phi$ vanishes leading to
\begin{equation}\label{eq:eom_phi}
  \d_{A}\phi = 0 \,.
\end{equation}
Solutions to this equation parameterize the Lie algebra of the isotropy group of $A$, i.e.\ continuous families of gauge transformations left
unbroken by $A$.
This is a finite dimensional Lie algebra. The connection $A$ is called irreducible if \eqref{eq:eom_phi} has no non-trivial solutions.

In order to perform the path integral, we will also need a BRST gauge fixing multiplet with the standard fields $c,\bar{c}$ and $B$, and a BRST transformation $\delta_{\text{BRST}}$ acting as
\begin{align}
  \delta_{\text{BRST}}c &= cc \;, &
  \delta_{\text{BRST}}\bar{c} &= \i B \;, &
  \delta_{\text{BRST}}B &= 0\;.
\end{align}
The action of $\delta_{\text{BRST}}$ on the rest of the fields is equivalent to a gauge transformation with parameter $c$, for instance\footnote{
  The covariant derivative acting on the ghost should really be with respect to the total connection $A+a$, but in the absence of ghost zero modes the operator $\d_{A}$ is what remains after localization.
}
\begin{align}
  \delta_{\text{BRST}}a = -\d_{A}c \;.
\end{align}
Localization in the presence of BRST symmetry requires consideration of the combined complex with a transformation $\delta+\delta_{\text{BRST}}$, however the localizing term we used is $\delta_{\text{BRST}}$ invariant.

We can impose background gauge by taking
\begin{align}\label{eq:gauge_fixing_term}
  S_{\text{g.f.}}
  &= (\delta+\delta_{\text{BRST}})\int_{\Sigma}\vf\;\tr\Big(\bar{c}\,\d_{A}^{\dagger}a\Big) \cr
  &= \int_{\Sigma}\vf\;\tr\left(\i B\,\d_{A}^{\dagger}a+\bar{c}\,\Delta_{A}^{\vphantom{\dagger}}c-\i\bar{c}\,\d_{A}^{\dagger}\psi\right)\,.
\end{align}
In these expressions, $A$ denotes a fixed background connection satisfying the Yang--Mills equation, while $a$ denotes the fluctuations in the connection orthogonal to $A$ under the standard inner product.
Both $a$ and $\psi$ are thus adjoint-valued one-forms. 

In order to simplify matters, we begin the calculation of the partition function by integrating out the BRST auxiliary field $B$.
This imposes background gauge on $a$.
We implicitly use this condition to simplify various expressions involving $a$ in what follows.
Integration over $B$ yields a functional delta function setting all modes of $a$ which do not satisfy background gauge to zero, and in addition gives the following functional determinant:
\begin{align}
  \mathcal{N}_{\text{BRST}} &= (\det \d_{A}^{\dagger})^{-1} \;.
\end{align}

\paragraph{The bosonic moduli space.}

At this point, we specialize to $G=\mathrm{SU}(2)$.
We will comment later regarding the challenges involved for higher rank groups.
We will use a basis for $\mathfrak{su}(2)$ consisting of
\begin{align}
  \tau_{i} = \i\sigma_{i} \;,
\end{align}
where $\sigma_{i}$ are the Pauli matrices. Note that
\begin{align}
  \tr(\tau_{i}\tau_{j}) = -2\delta_{ij} \;.
\end{align}

As mentioned in Section~\ref{SEC:Z_from_poisson_summation}, the solutions to the equation of motion \eqref{eq:Yang-Mills_equation} are Yang--Mills connections which can carry magnetic flux \cite{Atiyah:1982fa}.
The flux is parameterized by an element of the magnetic weight lattice of $G$, denoted $\Gamma$, which for $G=\mathrm{SU}(2)$ is a single integer $\m$, such that $\star F = 2\pi\m\,\tau_{3}$.

For $\m=0$, the solutions modulo gauge transformations form the moduli space of flat $\mathrm{SU}(2)$ connections on $\Sigma$, whose irreducible part for $g\ge2$ has dimension $6g-6$.
For $\m\ne0$, the moduli space consists of reducible connections, and is equivalent to a $2g$ dimensional space of flat $\mathrm{U}(1)$ connections that we denote by $\mathcal{M}_\m$.
For $g=1$, only the reducible part appears regardless of $\m$, while for $g=0$, the flux number $\m$ completely characterizes the solutions.

We henceforth restrict to $\m\ne0$, and therefore to reducible connections.
Solutions to the equation of motion for $\phi$ parameterize infinitesimal unbroken gauge transformations.
On the reducible part of the moduli space the isotropy group is $\mathrm{U}(1)$.
A solution is simply a constant profile for the scalar, which we parameterize as $\phi = 2\pi^2\phi_0\,\tau_{3}$, and which is moreover in the same Cartan subalgebra as the magnetic flux or the holonomies of $A$.

Thinking of the $\m$, $\phi_{0}$, and the abelian connections as living in a Cartan subalgebra, specifically the Lie algebra of the maximal torus for the unbroken $\mathrm{U}(1)$, the space of reducible solutions must also be divided by the Weyl group, which acts on the flat connections and on $\phi_{0}$ as a reflection in the Lie algebra of $\mathrm{U}(1)$, as well as taking $\m$ to $-\m$. 

We are interested in evaluating the effective action for $\phi_{0}$ in a sector where $\m\ne0$, and deriving from it the perturbation series in $\g$. The analogous calculation for higher rank groups is the effective action in a sector where $\m\in\Gamma$ lies in the interior of a Weyl chamber, rather than on a boundary.
Such an $\m$ can also be characterized as not being fixed by any element of the Weyl group.
The manipulations below, including the characterization of the various zero modes that arise in the process of localization, is valid for such an $\m$.
For $\m\in\Gamma$ not satisfying this criterion, we must deal with the various branches of the moduli space described above. We will not attempt this here.

\paragraph{Bosonic fluctuations.}

We now expand the bosonic part of \eqref{eq:localizing_action} around the Yang--Mills connection $A$ and the covariantly constant scalar $\phi_{0}$.
The fields $\phi$ and $a$ are fluctuations around this background, orthogonal to the bosonic moduli space in the standard inner product.
The localizing action in background gauge, expanded to quadratic order, and with the usual re-scaling of the fields to eliminate the dependence on $t$, reads
\begin{align}
  S_{\text{loc.}}
  = \frac{1}{32\pi^{4}} \int_{\Sigma} \tr(\i\phi_{0}\wedge K(a)+\i\phi\wedge \d_{A}\Delta_{A}a) + O(t^{-1/2})\,,
\end{align}
where $K(a)$ is an expression quadratic in $a$. 

We now take advantage of the linear dependence of this action on the fluctuation $\phi$, which will not appear elsewhere, to integrate out $\phi$.
This introduces a functional delta function
\begin{equation}\label{eq:delta_function_for_a}
  \delta(\d_{A}\Delta_{A}a) \;.
\end{equation}
Integration of $a$ using the delta function is, as usual, accompanied by a functional determinant
\begin{align}
  \mathcal{N}_{\text{bos.}} = \det{}'\left(\frac{1}{32\pi^{4}}\,\d_{A}\Delta_{A}\right)^{-1}\,,
\end{align}
where $\det'$ denotes a determinant over all modes outside the kernel of \eqref{eq:delta_function_for_a}.
Note that the factor $\mathcal{N}_{\text{bos.}}$ is independent of $\phi_{0}$, but depends explicitly on the background connection $A$.

Integration over the fluctuations $a$ is thus trivial for all modes not in the kernel of the operator $\d_{A}\Delta_{A}$, equivalently the kernel of $\Delta_{A}$.
The kernel itself consists of adjoint-valued one-forms satisfying
\begin{align}
  \d_{A}^{\vphantom{\dagger}}a = \d_{A}^{\dagger}a = 0 \;.
\end{align}
Solutions to these equations can be organized according to their $\hat{F}$ eigenvalue (see Appendix~\ref{sec:Yang-Mills-connections} for details).
We denote this eigenvalue $\pm\i\lambda$ following \cite{Atiyah:1982fa}, and noting that in our conventions
\begin{align}
  \lambda = \pm2\pi\m \,.
\end{align}
Vanishing $\lambda$ implies a direction tangent to the moduli space.
For a fixed non-zero $\lambda$, \eqref{eq:harmonic_one_forms} gives the dimension of the kernel as
\begin{align}
  \dim\text{Ker}_{\text{ad}_{\lambda}(P)\otimes\Omega^{1}}(\Delta_{A})=\dim H_{\d_{A}''}^{1}(\Sigma,\text{ad}_{\lambda}(P))+\dim H_{\d_{A}''}^{1}(\Sigma,\text{ad}_{-\lambda}(P))\,.
\end{align}
When evaluated on the kernel, the expression $K(a)$ collapses to $a\wedge\hat{F}a$.
Taking both values of $\lambda$ into account, we get a count of the number of \emph{real} bosonic modes in the kernel of \eqref{eq:delta_function_for_a}
\begin{align}
  N_{b} \equiv 2\dim H_{\d_{A}''}^{1}(\Sigma,\text{ad}_{\lambda}(P)) + 2\dim H_{\d_{A}''}^{1}(\Sigma,\text{ad}_{-\lambda}(P)) \;.
\end{align}

The modes collectively denoted by $a$ will not show up in the fermionic part of the action, and we perform the integration over them here.
Let $\{a_{i}\}_{i=1}^{N_{b}}$ be and orthonormal basis for the modes in the kernel of \eqref{eq:delta_function_for_a}, $x^{i}$ auxiliary bosonic variables, and $x^{i}a_{i}$ the orthogonal projection of the field $a$ to this sector.
The integration over all fluctuations $a$ gives the one-loop determinant
\begin{align}\label{eq:bosonic_one_loop}
  Z_{\text{1-loop}}^{\text{bos.}}
  &= \int\mathcal{D}a \; \delta(\d_{A}\Delta_{A}a) \; \exp\bigg(\frac{1}{32\pi^{4}}\int_{\Sigma}\tr\Big(2\pi^{2}\i\phi_{0}\tau_{3}\Big(a\wedge\hat{F}a\Big)\Big)\bigg) \cr
  &= \mathcal{N}_{\text{bos.}} \int\prod_{i=1}^{N_{b}} \bigg(\d x_{i} \; e^{-2\pi\i\m\phi_{0}x_{i}^{2}}\bigg) \;.
\end{align}
The $N_b$ integrals can be turned into a single radial integral, and further simplified by changing the integration variable to be the square root of the radial coordinate. Moreover, the resulting Jacobian can be expressed in terms of derivatives in $\phi_0$. Namely,
\begin{align}
  \int\prod_{i=1}^{N_{b}} \bigg(\d x_{i} \; e^{-2\pi\i\m\phi_{0}x_{i}^{2}}\bigg)
  = \frac{\Omega_{N_{b}-1}}{2} \, \bigg({-}\frac{1}{2\pi\i\m}\frac{\d}{\d\phi_{0}} \bigg)^{\!N_{b}/2-1} \int_0^\infty \d r \; e^{-2\pi\i\m\phi_{0}r} \;,
\end{align}
where
\begin{align}
  \Omega_{N_{b}-1} = \frac{2\pi^{N_{b}/2}}{\Gamma(N_{b}/2)}
\end{align}
is the volume of the unit ($N_{b}-1$)-sphere.
The last integral is of the same kind encountered in the finite-dimensional toy model of Section~\ref{SEC:toy_model} (cf.\ Footnote~\ref{FN:Heitler}). The result reads
\begin{align}
  Z_{\text{1-loop}}^{\text{bos.}}
  &= -\frac{\mathcal{N}_{\text{bos.}}}{(-2\i\m)^{N_{b}/2}\,\Gamma(N_{b}/2)} \, \frac{\d^{N_{b}/2-1}}{\d\phi_{0}{}^{N_{b}/2-1}}\,\bigg(\pv\frac{1}{\phi_{0}} + \i\pi\sign(\m)\,\delta(\phi_{0})\bigg) \;.
\end{align}

\paragraph{Fermionic fluctuations.}

We now examine the fermionic part of the action
\begin{align}
  \int_{\Sigma} \tr\left(\frac{1}{32\pi^{4}}\,\psi\wedge\left(\frac{1}{2}\Delta_{A}+\hat{F}\right)\psi + \bar{c}\,\Delta_{A}c - \i\bar{c}\,\d_{A}^{\dagger}\psi\right) \;.
\end{align}
Suppose that we substitute for $A$ a solution of the Yang--Mills equations.
The operator $\Delta_{A}$ acting on scalars is positive semi-definite.
The only modes of $c$ or $\bar{c}$ in the kernel of $\Delta_{A}$ are those with $\Lambda$ eigenvalue $0$.
These are covariantly constant scalars in the same Cartan subalgebra as $\star F$.
It is possible to further gauge fix to get rid of such modes, but this turns out to be equivalent to simply discarding them from the action a priori, which is the approach we follow.\footnote{
  We have implicitly done the same with the equivalent mode of the field $B$, which decouples from the gauge fixing term \eqref{eq:gauge_fixing_term}. Hence, there are no zero modes in the ghost sector.
} 

Zero modes of $\psi$ lie in the kernel of the operator $\frac{1}{2}\Delta_{A}+\hat{F}$.
These come in two varieties:
\begin{enumerate}
  \item Modes which are simultaneously in the kernel of both $\hat{F}$ and $\Delta_{A}$.
  These are in one to one correspondence with ordinary harmonic one-forms, and comprise a vector space of dimension $2g$.
  They are physical zero modes, representing tangent vectors to the moduli space of flat connections \cite{Atiyah:1982fa}.
  We will denote them $\{\psi_{0}^{i}\}_{i=1}^{2g}$.\footnote{
    Such modes should be completely absent from the non-classical part of the effective action (cf.\ \cite{Benini:2016hjo}), although we could not confirm this. We will only show that the partition function can be recovered without taking such terms into account.
  } 
  \item Modes for which $\hat{F}\psi$ is equal to $-|\lambda|\psi$ and $\Delta_{A}\psi$ takes its minimal value of $2|\lambda|$.
  These are spurious zero modes which must be dealt with before continuing with the calculation.
  According to \eqref{eq:minimal_eigenvalue_one_forms} all such modes are of the form $\d y$ for some scalar $y$.
  Moreover, for the $\{\lambda,-\lambda\} $ pair we have the following number of \emph{real} fermionic zero modes of this type
  \begin{align*}
    N_{f} \equiv 2\dim H_{\d_{A}''}^{0}(\Sigma,\text{ad}_{\lambda}(P)) + 2\dim H_{\d_{A}''}^{0}(\Sigma,\text{ad}_{-\lambda}(P)) \;.
  \end{align*}
\end{enumerate}
Note that modes of type 2 are not by themselves zero modes of the full fermionic action.
Because of the mixing term between $\psi$ and $\bar{c}$, the actual zero modes consist of the modes
\begin{align}
  \psi &= \d_{A}y \;, &
  c &= \i y \;.
\end{align}
The quadratic form for the fermions being antisymmetric, the zero modes are orthogonal to all non-zero modes and can be considered separately.
The non-zero modes, meanwhile, contribute a factor of 
\begin{align}
  \mathcal{N}_{\text{ferm.}} = \det{}'\bigg(\frac{1}{32\pi^{4}}\bigg(\frac{1}{2}\Delta_{A}+\hat{F}\bigg)\bigg)\,\det{}'(\Delta_{A}) \;.
\end{align}
As with the factor $\mathcal{N}_{\text{bos.}}$, this factor is independent of $\phi_{0}$ but depends explicitly on $A$.

In order to lift the type 2 zero modes, we must introduce an additional exact term.
Let $\{ y^{i}\} _{i=1}^{N_{f}}$ be an orthonormal basis for the bosonic version of the type 2 zero modes.
Let $v_{i}$ be anti-commuting parameters such that $v_{i}y^{i}$ is the orthogonal projection of the ghost field $c$ to the zero mode sector. We choose to add
\begin{align}\label{eq:femion_lifting_term}
  (\delta+\delta_{\text{BRST}}) \; \frac{1}{16\pi^{3}} \int_{\Sigma} \tr\big({-}\i F\big[\phi_{0},v_{i}y^{i}\big]\big)
  &= \frac{1}{16\pi^{3}}\int_{\Sigma}\tr\big(\d_{A}\psi\,\big[\phi_{0},v_{i}y^{i}\big]\big) \cr
  &= -\frac{\i}{16\pi^{3}}\int_{\Sigma}\tr\big(\d_{A}\d_{A}\big(v_{j}y^{j}\big)\big[\phi_{0},v_{i}y^{i}\big]\big) \cr
  &= \frac{\i}{16\pi^{3}}\,v_{j}v_{i}\int_{\Sigma}\tr\big(\big[F_{0},y^{j}\big]\big[\phi_{0},y^{i}\big]\big) \;,
\end{align}
where in the second line we have already taken the quadratic approximation and expanded $\psi$ in the appropriate modes. Evaluating the resulting Pfaffian in the variable $v$ gives a factor of
\begin{equation}\label{eq:one_loop_fermion_contribution}
  (2\i\m\phi_{0})^{N_f/2} \;.
\end{equation}

\paragraph{The effective action.}

The classical part of the effective action for the moduli is obtained by evaluating \eqref{eq:cohomological_Yang_Mills_action_2} on the zero modes $\phi_{0}^{\vphantom{i}},\psi_{0}^{i}$.
Because \eqref{eq:cohomological_Yang_Mills_action_2} depends only on $F$ and not on $A$, it does not depend on the flat connections, and reduces to a single expression on every connected component of the moduli space of Yang--Mills connections. This gives
\begin{align}
  -\log Z_{\text{effective}}^{\text{cl.}\vphantom{()}} = \g\phi_{0}^{2} + 2\pi\i\m\phi_{0}^{\vphantom{i}} + \frac{1}{2}\psi_{0}^{i}\wedge\psi_{0i}^{\vphantom{i}} \;.
\end{align}

The quantum part of the effective action for the mode $\phi_{0}$ is the distribution defined by the product of \eqref{eq:bosonic_one_loop} and \eqref{eq:one_loop_fermion_contribution}.
Using the expressions for $N_{b}$ and $N_{f}$, the Riemann--Roch theorem, and fact that
\begin{align}
  c_{1}(\text{ad}_{-\lambda}) = -c_{1}(\text{ad}_{\lambda}) \;,
\end{align}
and 
\begin{align}
  \dim_{\mathbb{C}} \text{ad}_{\lambda} = 1 \;,
\end{align}
we see that
\begin{align}
  \frac{1}{2}\left(N_b-N_f\right) = 2g-2 \;.
\end{align}
Moreover, the factors of $\phi_{0}$ multiplying the distribution give a different distribution with the same singular support.

The complete expression for the partition function then reads
\begin{align}
  Z = \frac{1}{2}\sum_{\m\in\mathbb{Z}}\int \d\phi_{0}^{\vphantom{i}}\int_{\mathcal{M}_{\m}}\int\prod_{i}\d\psi_{0}^{i} \;\, Z_{\text{effective}} \;,
\end{align}
where the full effective action is given by
\begin{align}
  Z_{\text{effective}}^{\vphantom{()}} = Z_{\text{effective}}^{\text{cl.}\vphantom{()}} \, Z_{\text{effective}}^{(1)} \, Z_{\text{effective}}^{(2)} \, Z_{\text{effective}}^{(2)} \;.
\end{align}
Taking into account both the measure for $\phi_{0}$ and the division by the center of $\mathrm{SU}(2)$, the final result for the quantum part of the effective action has been written as a product of the three expressions
\begin{align}\label{eq:quantum_effective_action}
  Z_{\text{effective}}^{(1)}
    &= -\frac{1}{(2g-3)!} \, \frac{\d^{2g-3}}{\d\phi_{0}{}^{2g-3}}\,\bigg(\pv\frac{1}{\phi_{0}} + \i\pi\sign(\m)\,\delta(\phi_{0})\bigg) \;, \\
  Z_{\text{effective}}^{(2)}
    &= (-4\m^2)^{1-g} \;, \\
  Z_{\text{effective}}^{(3)}
    &= \pi^{2}\mathcal{N}_{\text{BRST}}\,\mathcal{N}_{\text{bos.}}\,\mathcal{N}_{\text{ferm.}} \;. \vphantom{\bigg)^{2g-3}}
\end{align}

The first expression, the only one which depends on $\phi_{0}$, is indeed the one-loop determinant \eqref{EQ:Delta_g} predicted by the Poisson summation procedure.
We have separated the constants so that the result matches without prefactors.
This is our primary result from this section. 

The second expression amounts to an allowed renormalization of the type described in \eqref{eq:allowed_renormalization} albeit depending on the flux number $\m$.
The third expression is independent of $\phi_{0}$, but depends explicitly on the point in the Yang--Mills moduli space $\mathcal{M}_{\m}$ given by the background connection $A$.
In analogy with the discussion of the analytic torsion in the case of flat connections \cite{Witten:1991we}, we expect the integral of this expression over the moduli space to be a topological invariant, possibly depending on $\m$, and for the product
\begin{align}
  Z_{\text{effective}}^{(2)} \int_{\mathcal{M}_{\m}}Z_{\text{effective}}^{(3)} \;,
\end{align}
together with a numerical factor coming from the fermionic zero modes to be an $\m$ independent renormalization.
It should be possible to show this by manipulation of the determinants with an appropriate regularization scheme, but we have not done so.

To summarize, localization using the purely imaginary localizing term \eqref{eq:localizing_action} immediately yields the correct bosonic moduli space, while the fermion zero modes require a more detailed analysis.
The non-classical part of the effective action for the modulus $\phi_{0}$ comes from integrating out a collection of distinguished bosonic and fermionic fluctuations, the harmonic modes $x$ and the type-$2$ fermionic zero modes $y$.
A priori, counting these modes requires a detailed mode by mode analysis, but the combination appearing in the final expression can be recovered from an index theorem, in this case the Riemann--Roch theorem.
This is also the situation one often encounters when using a positive semi-definite localizing term. 

A number of issue remain to be resolved in this calculation.
The computation of the $\phi_{0}$ independent part of the quantum effective action is incomplete.
Moreover, the lifting of the fermion zero modes necessitated the addition of a term with a seemingly independent coefficient which alters the final result by numerical factors.
We hope to return to these matters.

\section{Discussion}\label{SEC:discussion}
We have demonstrated that the full perturbative expansion of the partition function of Yang--Mills theory on a genus $g$ surface around unstable instantons can be reliably computed in two different ways, at least for gauge group $\mathrm{SU}(2)$.
We first showed that the expansion can be computed starting with the lattice gauge theory result for the partition function and using a generalized form of Poisson summation.
We then demonstrated that the same expansion results from a somewhat non-standard supersymmetric localization calculation. 

A surprising feature of the latter calculation was the emergence of an effective action for a scalar mode which takes the form of a distribution, rather than a function.
The distributions we recover are specific regularizations of the singular functions which one obtains in the other localization approaches reviewed below.
Consequently, we expect that the results of any calculation which can be reliably carried out in either approach will agree.
It would be interesting to investigate how such a distribution could come about from a determinant over the fluctuations for a positive semi-definite localizing term, or the associated index theorem.
Preliminary steps in this direction were taken in \cite{Leeb-Lundberg:2023jsj}.

Benini and Zaffaroni have previously performed standard localization computations for supersymmetric gauge theories, which are further applicable to any A-twisted $\mathcal{N}=(2,2)$ theory on a closed Euclidean 2-manifold \cite{Benini:2015noa,Benini:2016hjo}.
Results for 2dYM can be straightforwardly read off from their work.
The effective action for the scalar modulus, which is complex in their setting, implied by their work is a meromorphic function, rather than a distribution.
In fact, meromorphic functions are generically produced by localization in this class of models, even in the presence of matter superfields \cite{Benini:2016hjo}.
Such singular effective actions would naively imply ambiguous results for protected observables.
However, detailed examination has shown that they can be dealt with in some cases.

A detailed examination of the breakdown of localization, or rather deformation invariance, in this context, specifically at genus one, was initiated in \cite{Benini:2013nda,Benini:2013xpa}.
The authors showed that, at least in a wide class of models satisfying some conditions on the gauge charges of chiral multiplets, the integration over the supersymmetric moduli should be carried out over a contour in the complex plane associated with the JK residue prescription \cite{jeffrey1993localization}.
This prescription, which was introduced following Witten's work on non-abelian localization, yields non-ambiguous results for protected observables.
The prescription has passed numerous checks.
The results were generalized to higher genus in \cite{Benini:2016hjo}. 

We do not believe that 2dYM falls within the class of models to which the JK residue prescription applies, but we cannot state with certainty that the results of \cite{Benini:2016hjo} cannot be generalized to
this case.
However, the analogous analysis of Chern--Simons theory, carried out in \cite{Benini:2015noa}, emphasizes the need to perform the sum over instanton sectors \emph{before} applying the JK residue prescription.
We believe that this implies that one would not be able to use the machinery of the JK residue to compute the perturbation series, rather only the final result, for 2dYM. 

The expression \eqref{eq:quantum_effective_action} is similar to the effective action for $\phi_{0}$ provided by the derivation using torus gauge in \cite{Blau:1993hj}, or from the twisted $\mathcal{N}=(2,2)$ perspective examined in \cite{Benini:2016hjo,Benini:2015noa}.
Both of these approaches yield an effective action of the form
\begin{equation}\label{eq:previous_effective_action}
  \phi_{0}^{2-2g} \;,
\end{equation}
although in \cite{Benini:2016hjo,Benini:2015noa} the status of this expression is different because the contour of integration for $\phi_{0}$ is not the same.
As explained in the toy example, \eqref{eq:quantum_effective_action} can be thought of as a specific regularization of \eqref{eq:previous_effective_action}.

There are a number of unsatisfactory aspects of the JK residue prescription which one may hope to address with an alternative localization procedure such as the one used here.
For one, to the best of our knowledge, the JK residue prescription can only be used for gauge theories with charged matter satisfying certain criteria and yielding specific types of singularities for the effective action. We do not believe that any such restriction applies to our procedure.
It is often useful to be able to gauge global symmetries at the level of the exact partition
function, without specifying the dynamical fields.
Such a process is possible in most localization computations by weakly gauging the symmetry, whereby the partition function depends on additional parameters.
However, the JK residue prescription requires full knowledge of the charges of dynamical chiral multiplets, and therefore presumably does not allow one to gauge global symmetries without reverting to the original description of the theory.
It seems plausible that this difficulty can be addressed in our procedure, since multiplication of distributions with compatible wave front sets is a well-defined process.

A JK residue type prescription is needed in order to evaluate analogous partition functions in higher dimensions, e.g.\ for twisted $\mathcal{N}=2$ theories on toric K\"{a}hler manifolds in four dimensions \cite{Bershtein:2015xfa} and the related $\mathcal{N}=1$ indices in five dimensions \cite{Hosseini:2018uzp}.
Because of the need for a mode by mode analysis, justifying the use of a specific prescription becomes increasingly difficult in these contexts.
The derivation of a distributional effective action may be simpler in these situations.

In upcoming work, we incorporate non-simply-connected groups and discrete theta angles, and also apply our techniques to the expectation values of observables like the Wilson loop. We expect to be able to generalize our results to higher rank groups.

\section*{Acknowledgments}
LG and DS are supported, in part, by Istituto Nazionale di Fisica Nucleare (INFN) through the
“Gauge and String Theory” (GAST) research project.
The research of RP is funded, in part, by the Deutsche Forschungsgemeinschaft (DFG, German Research Foundation) -- Projektnummer 277101999 -- TRR 183 (project A03 and B01).
JP acknowledges financial support from the European Research Council (grant BHHQG-101040024). Views and opinions expressed are however those of the authors only and do not necessarily reflect those of the European Union or the European Research Council. Neither the European Union nor the granting authority can be held responsible for them.
IY is supported, in part, by UKRI consolidated grants ST/P000711/1 and ST/T000775/1.

\appendix

\section{Fourier transforms}\label{APP:Fourier}
In this appendix we compute the Fourier transform of the distribution $f(x)$ defined in \eqref{EQ:f}. In order to do so, we consider the Fourier transforms of the two factors in \eqref{EQ:f}, which are given by
\allowdisplaybreaks
\par\vspace{10pt}
{\centering
\begin{tabular}{ccl}
\toprule
  $F(x)$ & $\hat{F}(p)$ & \\
\midrule
  $\displaystyle -\frac{1 }{(2 g-3)!} \, \frac{\d^{2g-3}}{\d x^{2 g-3}} \left(\pv\frac{1}{x}\right)$ \qquad\qquad & $\displaystyle\frac{(-\pi^2)^{g-1}}{(2g-3)!} \, (2|p|)^{2g-3}$ & for $g\in\mathbb{Z}$, $g\geq2$, \\[12pt]
  $\displaystyle e^{-\g x^2}$ & $\displaystyle\frac{\sqrt{\pi}}{a} \, e^{-\pi^2p^2/\g}$ & for $\g>0$. \\[5pt]
\bottomrule
\end{tabular}
\par}
\vspace{10pt}
\noindent
We can reconstruct the Fourier transform of \eqref{EQ:f} as the convolution of the two transforms in the table above, namely
\begin{align}
  \hat{f}(p)
  &= \frac{(-4\pi^2)^{g-1}}{(2g-3)!} \frac{\sqrt{\pi}}{2\sqrt{\g}} \int_{-\infty}^{\infty} \d q \; |q|^{2g-3} \, e^{-\pi^2(p-q)^2/\g} \cr
  &= \frac{(-4\g)^{g-1}}{(2g-3)!} \frac{\sqrt{\pi}}{2\sqrt{\g}} \int_{0}^{\infty} \d q \; q^{2g-3} \left( e^{-(q+\pi p/\sqrt{\g})^2} + e^{-(q-\pi p/\sqrt{\g})^2} \right) \;,
\end{align}
where in the last line we have split the integration range and changed integration variable with $q\mapsto\pm\pi q/\sqrt{\g}$.
The integral can be readily evaluated and expressed in terms of a Kummer confluent hypergeometric function with
\begin{align}\label{EQ:fhat_as_Kummer}
  \hat{f}(p)
  &= \frac{(-1)^{g-1} \pi}{\Gamma(g-1/2)} \, \g^{g-3/2} \;{}_1F_1(3/2-g;1/2;-\pi^2p^2/\g) \;.
\end{align}
However, for our purposes, it is useful to recast it as
\begin{align}
  \hat{f}(p) = \frac{(-4\g)^{g-1}}{(2g-3)!} \frac{\sqrt{\pi}}{2\sqrt{\g}} \, \Big( G(\pi p/\sqrt{\g}) + G(-\pi p/\sqrt{\g}) \Big) \;,
\end{align}
where we have introduced
\begin{align}
  G(x)
  &= \int_{0}^{\infty} \d q \; q^{2g-3} e^{-(q-x)^2} \cr
  &= \frac{e^{-x^2}}{2^{2g-3}} \, \frac{\partial^{2g-3}}{\partial x^{2g-3}} \int_{0}^{\infty} \d q \; e^{-q^2+2xq} \cr
  &= \frac{\sqrt{\pi} \, e^{-x^2}}{2^{2g-2}} \, \frac{\partial^{2g-3}}{\partial x^{2g-3}} \, \Big[e^{x^2} (1+\erf(x)) \Big] \;.
\end{align}
The above can be evaluated in terms of Hermite polynomials by using
\begin{align}
  \frac{\partial^k}{\partial x^k} \, e^{-x^2} = (-1)^k \, e^{-x^2} H_k(x) \;.
\end{align}

We can now rewrite \eqref{EQ:fhat_as_Kummer} as
\begin{align}
  \hat{f}(p) = e^{-\pi^2p^2/\g} \, P(p) + \erf(\pi p/\sqrt{\g}) \; Q'(p) \;,
\end{align}
where $P$ and $Q$ are even polynomials defined with
\begin{align}
  P(p) \label{EQ:P}
  &= (-1)^{g-1} \, (2\pi)^{2g-3} \, \sqrt{\frac{\g}{\pi}} \, p^{2g-4} \sum_{k=0}^{g-2}\bigg({-}\frac{\g}{2\pi^2p^2}\bigg)^{\!k}\sum_{m=0}^k \frac{(-1)^m\,(2k-2m-1)!!}{2^m\,m!\,(2g-2m-3)!} \;, \\
  Q(p) \label{EQ:Q}
  &= \frac{(-1)^{g-1}\sqrt{\pi}\,\g^{g-1}}{2\,\Gamma(g-1/2)} \; L_{g-1}^{-\frac{1}{2}}\bigg({-}\frac{\pi^2p^2}{\g}\bigg) \;.
\end{align}
The first definition can be obtained from
\begin{align}
  P(p)
  = (-1)^{g-1} \, 2\sqrt{\pi} \, \g^{g-\frac{3}{2}} \sum_{k=0}^{2g-4} \frac{\i^k}{k!\,(2g-3-k)!} \, H_k\bigg(\frac{\i\pi p}{\sqrt{\g}}\bigg) \, H_{2g-4-k}\bigg(\frac{\pi p}{\sqrt{\g}}\bigg) \;,
\end{align}
while the second implies\footnote{
  The second equality follows from the well-known relation between generalized Laguerre and Hermite polynomials
  \[
  H_{2n+1}(x) = (-1)^n \, 2^{2n+1} \, n! \, x \, L_n^{\frac{1}{2}}(x^2) \;.
  \]
}
\begin{align}\label{EQ:Q'}
  Q'(p)
  &= \frac{(-1)^{g-1}\pi^{\frac{5}{2}}\,\g^{g-2}}{\Gamma(g-1/2)} \, p \, L_{g-2}^{\frac{1}{2}}\bigg({-}\frac{\pi^2p^2}{\g}\bigg) \cr
  &= \frac{\i\pi\,\g^{g-\frac{3}{2}}}{(2g-3)!} \, H_{2g-3}\bigg(\frac{\i\pi p}{\sqrt{\g}}\bigg) \;.
\end{align}

\section{Yang--Mills connections}\label{sec:Yang-Mills-connections}

We will need several results about the moduli space of Yang--Mills connections on Riemann surfaces which can be found in \cite{Atiyah:1982fa}, or can be straightforwardly derived from the results therein.
We refer the reader to the original work of Atiyah and Bott \cite{Atiyah:1982fa} for a detailed analysis.

Let $\Sigma$ be a smooth two manifold with a Riemannian metric, and let $\star$ be the Hodge star operator.
The exterior derivative is denoted by $\d$, and we denote by $\d^{\dagger} \equiv -\star\d\,\star$ its adjoint with respect to the usual inner product for differential forms, namely,
\begin{align}
  (\eta,\omega) &\equiv \int_{\Sigma}\eta\wedge\star\omega \;.
\end{align}
Let $\eta$ be a differential form of degree $[\eta]$, then
\begin{align}
  \star^{2}\eta = (-1)^{[\eta]} \;.
\end{align}
Specifically, $\star^{2} = -1$ when acting on one-forms.
The Hodge star therefore splits the space of complexified one-forms $\Omega_{\mathbb{C}}^{1}$ into the forms with $\star$-eigenvalue $\pm\i$, denoted $\Omega^{1,0}$ and $\Omega^{0,1}$, and splits the exterior derivative $\d$ into $\d'$ and $\d''$ according to the image. This induces a holomorphic structure on $\Sigma$. 

Let $A$ be a connection on a principle $G$-bundle $P$ over a Riemann surface $\Sigma$, and let $F$ be its curvature.
$A$ satisfies the Yang--Mills equations if 
\begin{align}  
  \d_{A}^{\dagger}F=0\,,
\end{align}
where $\d_{A}$ is the covariant exterior derivative, and $\d_{A}^{\dagger}$ its adjoint.
The background gauge condition is defined to be $\d_{A}^{\dagger}A=0$.
The following facts apply to such an $A$ \cite{Atiyah:1982fa}.

The adjoint scalar $\star F$ is covariantly constant.
The eigenvalues of the operator
\begin{align}
  \Lambda \equiv \i[\star F,\cdot]
\end{align}
acting on the space of adjoint-valued forms are locally constant, and such forms can be decomposed accordingly. 
The maps given by the operator $\d_{A}$, and by its Hodge dual $\d_{A}^{\dagger}$, commute with $\Lambda$ and respect this decomposition. We define also the
operator
\begin{align}
  \hat{F} \equiv -\i\star\Lambda \;,
\end{align}
acting on adjoint-valued one-forms.

The covariant Laplacian is given by
\begin{align}
  \Delta_{A}^{\vphantom{\dagger}}
  \equiv \d_{A}^{\vphantom{\dagger}}\d_{A}^{\dagger} + \d_{A}^{\dagger}\d_{A}^{\vphantom{\dagger}} \;.
\end{align}
$\Delta_{A}$ is a positive semi-definite operator.
The tangent space to the moduli space of solutions at $A$ consists of adjoint-valued one-forms annihilated by both $\hat{F}$ and $\Delta_{A}$.
The Hessian for the Yang--Mills term, around a solution of the Yang--Mills equations and in background gauge, is equivalent to $\Delta_{A} + \hat{F}$ and vanishes only when acting on directions tangent to the moduli space.

The holomorphic structure defined on the tangent bundle of $\Sigma$ extends to the associated adjoint bundle $\operatorname{ad}(P)$.
This allows complexified adjoint-valued one-forms to be decomposed into $\Omega^{1,0}(\Sigma,\operatorname{ad}_{\mathbb{C}}(P))$ and $\Omega^{0,1}(\Sigma,\operatorname{ad}_{\mathbb{C}}(P))$ on which $\star$ acts as $-\i$ and $\i$ respectively, with a corresponding decomposition of the covariant exterior derivative into $\d_{A}'$ and $\d_{A}''$.

The holomorphic structure and the operators $\d_{A}''$ and $\d_{A}'$ are compatible with the decomposition according to eigenvalues of $\Lambda$.
We denote by $\operatorname{ad}_{\lambda}(P)$ the vector bundle with $\Lambda$ eigenvalue $\lambda$ and define the Laplacians 
\begin{align}
  \Box_{A}'' &\equiv \d_{A}''(\d_{A}'')^{\dagger} + (\d_{A}'')^{\dagger}\d_{A}'' \;, \\
  \Box_{A}' &\equiv \d_{A}'(\d_{A}')^{\dagger} + (\d_{A}')^{\dagger}\d_{A}' \;.
\end{align}
$\Box_{A}''$ and $\Box_{A}'$ define the same operator on $\Omega^{1,0}(\Sigma,\operatorname{ad}_{\lambda}(P))$ and $\Omega^{0,1}(\Sigma,\operatorname{ad}_{\lambda}(P))$, and when acting there,
\begin{align}
  \Delta_{A}^{\vphantom{'}} = 2\Box_{A}'' = 2\Box_{A}' \;.
\end{align}
When acting on $\Omega^{0,0}(\Sigma,\operatorname{ad}_{\lambda}(P))$, we instead have 
\begin{align}
  \Box_{A}' + \Box_{A}'' &= \Delta_{A}^{\vphantom{'}} \;, \\
  \Box_{A}' - \Box_{A}'' &= \lambda \;.
\end{align}
We have the following isomorphisms, following from Hodge theory and Serre duality,
\begin{align}
  \operatorname{ker}_{\operatorname{ad}_{\lambda}(P)\otimes\Omega^{1}}(\Delta_{A})
  &\simeq H_{\d_{A}''}^{1}(\Sigma,\operatorname{ad}_{\lambda}(P))\oplus H_{\d_{A}''}^{1}(\Sigma,\operatorname{ad}_{-\lambda}(P)) \;,\\
  \operatorname{ker}_{\operatorname{ad}_{\lambda}(P)}(\d_{A}'')
  &\simeq H_{\d_{A}''}^{0}(\Sigma,\operatorname{ad}_{\lambda}(P)) \;,\\
  \operatorname{ker}_{\operatorname{ad}_{\lambda}(P)}(\d_{A}')
  &\simeq\operatorname{ker}_{\operatorname{ad}_{\lambda}(P)\otimes\Omega^{2}}((\d_{A}'')^{\dagger})\simeq H_{\d_{A}''}^{0}(\Sigma,\operatorname{ad}_{-\lambda}(P)) \;.
\end{align}
Specifically,
\begin{align}\label{eq:harmonic_one_forms}
  \dim\operatorname{ker}_{\operatorname{ad}_{\lambda}(P)\otimes\Omega^{1}}(\Delta_{A})=\dim H_{\d_{A}''}^{1}(\Sigma,\operatorname{ad}_{\lambda}(P))+\dim H_{\d_{A}''}^{1}(\Sigma,\operatorname{ad}_{-\lambda}(P))\,.
\end{align}

For adjoint-valued one-forms with $\Lambda$ eigenvalue $\lambda$, the smallest non-zero eigenvalue of $\Delta_{A}$ is greater than or equal to $2|\lambda|$.
We will need the following result, which we derive from \cite{Atiyah:1982fa}:

\begin{proposition}
Let $v\in\Omega^{1,0}(\Sigma,\operatorname{ad}_{\mathbb{C}}(P))$ be such that $\Delta_{A}v=2|\lambda|v$. Then $v=\d_{A}'x$ for some $x\in\Omega^{0,0}(\Sigma,\operatorname{ad}_{\mathbb{C}}(P))$ satisfying $\d_{A}''x=0$.
Likewise, let $v\in\Omega^{0,1}(\Sigma,\operatorname{ad}_{\mathbb{C}}(P))$ be such that $\Delta_{A}v=2|\lambda|v$. Then $v=\d_{A}''x$ for some $x\in\Omega^{0,0}(\Sigma,\operatorname{ad}_{\mathbb{C}}(P))$ satisfying $\d_{A}'x=0$.
\end{proposition}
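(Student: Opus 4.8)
The plan is to stay within a single $\Lambda$-eigenbundle and to exploit the fact that on a Riemann surface the Dolbeault complexes are very short, $\Omega^{2,0}=\Omega^{0,2}=0$. Since $\Delta_{A}$, $\d_{A}'$, $\d_{A}''$ and the $L^{2}$ inner product on forms all commute with $\Lambda$, and since complex conjugation carries $\Omega^{1,0}(\Sigma,\operatorname{ad}_{\lambda}(P))$ to $\Omega^{0,1}(\Sigma,\operatorname{ad}_{-\lambda}(P))$ while exchanging $\d_{A}'\leftrightarrow\d_{A}''$, it is enough to prove the first assertion for $v\in\Omega^{1,0}(\Sigma,\operatorname{ad}_{\lambda}(P))$ with a single eigenvalue $\lambda\neq0$; the second assertion then follows by conjugation.

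First I would use the degree restrictions to simplify the Laplacians acting on such a $v$. One has $\d_{A}'v\in\Omega^{2,0}=0$, so that $\Box_{A}'v=\d_{A}'(\d_{A}')^{\dagger}v$; combining this with $\Delta_{A}=2\Box_{A}'$ on $\Omega^{1,0}(\operatorname{ad}_{\lambda}(P))$ and the hypothesis $\Delta_{A}v=2|\lambda|v$ yields $\d_{A}'(\d_{A}')^{\dagger}v=|\lambda|\,v$. Hence $x:=|\lambda|^{-1}(\d_{A}')^{\dagger}v$ is a well-defined element of $\Omega^{0,0}(\operatorname{ad}_{\lambda}(P))$ (it lands in $\operatorname{ad}_{\lambda}(P)$ because $(\d_{A}')^{\dagger}$ commutes with $\Lambda$) and satisfies $\d_{A}'x=v$. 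Moreover $(\d_{A}')^{\dagger}x=0$ since $x$ is a $0$-form, so $\Box_{A}'x=(\d_{A}')^{\dagger}\d_{A}'x=(\d_{A}')^{\dagger}v=|\lambda|\,x$. Applying the identity $\Box_{A}''=\Box_{A}'-\lambda$ on $\Omega^{0,0}(\operatorname{ad}_{\lambda}(P))$ then gives $\Box_{A}''x=(|\lambda|-\lambda)\,x$. For the relevant sign of $\lambda$ the right-hand side vanishes, and then $(\d_{A}''x,\d_{A}''x)=(\Box_{A}''x,x)=0$, so $\d_{A}''x=0$, which is the claim.

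The step that needs care is exactly the sign of $\lambda$ at the end: one must know that the minimal eigenvalue $2|\lambda|$ is attained on $\Omega^{1,0}(\operatorname{ad}_{\lambda}(P))$ only for $\lambda>0$, and on $\Omega^{0,1}(\operatorname{ad}_{\lambda}(P))$ only for $\lambda<0$. In the situations where the Proposition is applied the mode $v$ additionally satisfies $\hat{F}v=-|\lambda|v$; since $\hat{F}=-\i\star\Lambda$ acts as $-\lambda$ on $\Omega^{1,0}(\operatorname{ad}_{\lambda}(P))$ and as $+\lambda$ on $\Omega^{0,1}(\operatorname{ad}_{\lambda}(P))$, this pins $\lambda=+|\lambda|$ in the first case and $\lambda=-|\lambda|$ in the second, which are precisely the favorable signs. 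Beyond this bookkeeping the argument is just Hodge theory for the two-term complex $\d_{A}'\colon\Omega^{0,0}(\operatorname{ad}_{\lambda}(P))\to\Omega^{1,0}(\operatorname{ad}_{\lambda}(P))$ and its conjugate, together with the relations between $\Box_{A}'$, $\Box_{A}''$ and $\Delta_{A}$ already recorded above, so I expect tracking this sign (and the identifications with $H^{0}_{\d_{A}''}$ it entails) to be the main obstacle.
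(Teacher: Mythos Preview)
Your proof is correct and follows essentially the same route as the paper's: both reduce to the identity $\Box_A'-\Box_A''=\lambda$ on $\Omega^{0,0}(\operatorname{ad}_\lambda(P))$ together with $\Delta_A=2\Box_A'=2\Box_A''$ on $(1,0)$/$(0,1)$-forms, and both need the sign hypothesis ($\lambda>0$ for $\Omega^{1,0}$, $\lambda<0$ for $\Omega^{0,1}$), which the paper also imposes explicitly. The only cosmetic difference is that you construct $x=|\lambda|^{-1}(\d_A')^\dagger v$ directly, whereas the paper invokes the Hodge decomposition to write $v=\d_A''x$ and then computes $\Box_A''x$; these are the same argument up to conjugation.
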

\begin{proof}
Let $v\in\Omega^{1}(\Sigma,\operatorname{ad}_{\lambda}(P))$ be such that $\Delta_{A}v=2|\lambda|v$. Assume $v\in\Omega^{0,1}(\Sigma,\operatorname{ad}_{\mathbb{C}}(P))$ and $\lambda<0$ so that $\hat{F}v = -|\lambda|v$. From the Hodge decomposition with respect to $\d_{A}''$ we have $v = \d_{A}''x$ since $v$ is not harmonic.
We always have 
\begin{align}
  \Box_{A}'x - \Box_{A}''x = \lambda x \;.
\end{align}
Now we compute
\begin{align*}
-2\lambda \d_{A}''x & =2\{\d_{A}'',(\d_{A}'')^{\dagger}\} \d_{A}''x\\
  & =2\d_{A}''\Box_{A}''x \;.
\end{align*}
Hence, by acting on the left with $(\d_{A}'')^{\dagger}$ we have
\begin{align}
  \Box_{A}''x = -\lambda x \;,
\end{align}
leading to $\Box_{A}'x = 0$, which implies
\begin{align}
  \d_{A}'x = 0 \;.
\end{align}
The case with $v\in\Omega^{1,0}(\Sigma,\operatorname{ad}_{\mathbb{C}}(P))$ and $\lambda>0$ is proved using the decomposition with respect to $\d_{A}'$.
\end{proof}

Combined with the isomorphisms above, this gives
\begin{equation}\label{eq:minimal_eigenvalue_one_forms}
  \dim(\{v\in\Omega^{1}(\Sigma,\operatorname{ad}_{\lambda}(P))\,|\,\Delta_{A}v=2|\lambda|v\} )=\dim H_{\d_{A}''}^{0}(\Sigma,\operatorname{ad}_{\lambda}(P))+\dim H_{\d_{A}''}^{0}(\Sigma,\operatorname{ad}_{-\lambda}(P)) \;.
\end{equation}
Finally, the Riemann--Roch theorem for $\d_{A}''$ states 
\begin{align}
  \dim H_{\d_{A}''}^{0}(\Sigma,\operatorname{ad}_{\lambda}(P))-\dim H_{\d_{A}''}^{1}(\Sigma,\operatorname{ad}_{\lambda}(P))=(c_{1}(\operatorname{ad}_{\lambda})+(1-g)\dim_{\mathbb{C}}\operatorname{ad}_{\lambda}) \;.
\end{align}

\bibliographystyle{JHEP}
\bibliography{bibliography}

\providecommand{\href}[2]{#2}\begingroup\raggedright\begin{thebibliography}{10}

\bibitem{Migdal:1975zg}
A.A.~Migdal, \emph{{Recursion Equations in Gauge Theories}}, {\emph{Sov. Phys. JETP} {\bfseries 42} (1975) 413}.

\bibitem{Witten:1991we}
E.~Witten, \emph{{On quantum gauge theories in two-dimensions}}, \href{https://doi.org/10.1007/BF02100009}{\emph{Commun. Math. Phys.} {\bfseries 141} (1991) 153}.

\bibitem{Witten:1992xu}
E.~Witten, \emph{{Two-dimensional gauge theories revisited}}, \href{https://doi.org/10.1016/0393-0440(92)90034-X}{\emph{J. Geom. Phys.} {\bfseries 9} (1992) 303} [\href{https://arxiv.org/abs/hep-th/9204083}{{\ttfamily hep-th/9204083}}].

\bibitem{Blau:1993hj}
M.~Blau and G.~Thompson, \emph{{Lectures on 2-d gauge theories: Topological aspects and path integral techniques}},  in \emph{{Summer School in High-energy Physics and Cosmology (Includes Workshop on Strings, Gravity, and Related Topics 29-30 Jul 1993)}}, pp.~0175--244, 10, 1993 [\href{https://arxiv.org/abs/hep-th/9310144}{{\ttfamily hep-th/9310144}}].

\bibitem{Gross:1992tu}
D.J.~Gross, \emph{{Two-dimensional QCD as a string theory}}, \href{https://doi.org/10.1016/0550-3213(93)90402-B}{\emph{Nucl. Phys. B} {\bfseries 400} (1993) 161} [\href{https://arxiv.org/abs/hep-th/9212149}{{\ttfamily hep-th/9212149}}].

\bibitem{Gross:1993hu}
D.J.~Gross and W.~Taylor, \emph{{Two-dimensional QCD is a string theory}}, \href{https://doi.org/10.1016/0550-3213(93)90403-C}{\emph{Nucl. Phys. B} {\bfseries 400} (1993) 181} [\href{https://arxiv.org/abs/hep-th/9301068}{{\ttfamily hep-th/9301068}}].

\bibitem{Douglas:1993iia}
M.R.~Douglas and V.A.~Kazakov, \emph{{Large N phase transition in continuum QCD in two-dimensions}}, \href{https://doi.org/10.1016/0370-2693(93)90806-S}{\emph{Phys. Lett. B} {\bfseries 319} (1993) 219} [\href{https://arxiv.org/abs/hep-th/9305047}{{\ttfamily hep-th/9305047}}].

\bibitem{Gross:1994mr}
D.J.~Gross and A.~Matytsin, \emph{{Instanton induced large N phase transitions in two-dimensional and four-dimensional QCD}}, \href{https://doi.org/10.1016/S0550-3213(94)80041-3}{\emph{Nucl. Phys. B} {\bfseries 429} (1994) 50} [\href{https://arxiv.org/abs/hep-th/9404004}{{\ttfamily hep-th/9404004}}].

\bibitem{Bassetto:1994vj}
A.~Bassetto, F.~De~Biasio and L.~Griguolo, \emph{{Lightlike Wilson loops and gauge invariance of Yang-Mills theory in (1+1)-dimensions}}, \href{https://doi.org/10.1103/PhysRevLett.72.3141}{\emph{Phys. Rev. Lett.} {\bfseries 72} (1994) 3141} [\href{https://arxiv.org/abs/hep-th/9402029}{{\ttfamily hep-th/9402029}}].

\bibitem{Bassetto:1998sr}
A.~Bassetto and L.~Griguolo, \emph{{Two-dimensional QCD, instanton contributions and the perturbative Wu-Mandelstam-Leibbrandt prescription}}, \href{https://doi.org/10.1016/S0370-2693(98)01319-7}{\emph{Phys. Lett. B} {\bfseries 443} (1998) 325} [\href{https://arxiv.org/abs/hep-th/9806037}{{\ttfamily hep-th/9806037}}].

\bibitem{Drukker:2007qr}
N.~Drukker, S.~Giombi, R.~Ricci and D.~Trancanelli, \emph{{Supersymmetric Wilson loops on $S^3$}}, \href{https://doi.org/10.1088/1126-6708/2008/05/017}{\emph{JHEP} {\bfseries 05} (2008) 017} [\href{https://arxiv.org/abs/0711.3226}{{\ttfamily 0711.3226}}].

\bibitem{Pestun:2009nn}
V.~Pestun, \emph{{Localization of the four-dimensional N=4 SYM to a two-sphere and 1/8 BPS Wilson loops}}, \href{https://doi.org/10.1007/JHEP12(2012)067}{\emph{JHEP} {\bfseries 12} (2012) 067} [\href{https://arxiv.org/abs/0906.0638}{{\ttfamily 0906.0638}}].

\bibitem{Giombi:2009ds}
S.~Giombi and V.~Pestun, \emph{{Correlators of local operators and 1/8 BPS Wilson loops on $S^2$ from 2d YM and matrix models}}, \href{https://doi.org/10.1007/JHEP10(2010)033}{\emph{JHEP} {\bfseries 10} (2010) 033} [\href{https://arxiv.org/abs/0906.1572}{{\ttfamily 0906.1572}}].

\bibitem{Bassetto:2009rt}
A.~Bassetto, L.~Griguolo, F.~Pucci, D.~Seminara, S.~Thambyahpillai and D.~Young, \emph{{Correlators of supersymmetric Wilson-loops, protected operators and matrix models in N=4 SYM}}, \href{https://doi.org/10.1088/1126-6708/2009/08/061}{\emph{JHEP} {\bfseries 08} (2009) 061} [\href{https://arxiv.org/abs/0905.1943}{{\ttfamily 0905.1943}}].

\bibitem{Giombi:2018qox}
S.~Giombi and S.~Komatsu, \emph{{Exact Correlators on the Wilson Loop in $\mathcal{N}=4$ SYM: Localization, Defect CFT, and Integrability}}, \href{https://doi.org/10.1007/JHEP05(2018)109}{\emph{JHEP} {\bfseries 05} (2018) 109} [\href{https://arxiv.org/abs/1802.05201}{{\ttfamily 1802.05201}}].

\bibitem{Gross:1994ub}
D.J.~Gross and A.~Matytsin, \emph{{Some properties of large N two-dimensional Yang-Mills theory}}, \href{https://doi.org/10.1016/0550-3213(94)00570-5}{\emph{Nucl. Phys. B} {\bfseries 437} (1995) 541} [\href{https://arxiv.org/abs/hep-th/9410054}{{\ttfamily hep-th/9410054}}].

\bibitem{Giombi:2009ek}
S.~Giombi and V.~Pestun, \emph{{The 1/2 BPS 't Hooft loops in N=4 SYM as instantons in 2d Yang-Mills}}, \href{https://doi.org/10.1088/1751-8113/46/9/095402}{\emph{J. Phys. A} {\bfseries 46} (2013) 095402} [\href{https://arxiv.org/abs/0909.4272}{{\ttfamily 0909.4272}}].

\bibitem{Rusakov:1990rs}
B.E.~Rusakov, \emph{{Loop averages and partition functions in U(N) gauge theory on two-dimensional manifolds}}, \href{https://doi.org/10.1142/S0217732390000780}{\emph{Mod. Phys. Lett. A} {\bfseries 5} (1990) 693}.

\bibitem{Pestun:2007rz}
V.~Pestun, \emph{{Localization of gauge theory on a four-sphere and supersymmetric Wilson loops}}, \href{https://doi.org/10.1007/s00220-012-1485-0}{\emph{Commun. Math. Phys.} {\bfseries 313} (2012) 71} [\href{https://arxiv.org/abs/0712.2824}{{\ttfamily 0712.2824}}].

\bibitem{Aniceto:2018bis}
I.~Aniceto, G.~Basar and R.~Schiappa, \emph{{A Primer on Resurgent Transseries and Their Asymptotics}}, \href{https://doi.org/10.1016/j.physrep.2019.02.003}{\emph{Phys. Rept.} {\bfseries 809} (2019) 1} [\href{https://arxiv.org/abs/1802.10441}{{\ttfamily 1802.10441}}].

\bibitem{Dorigoni:2017smz}
D.~Dorigoni and P.~Glass, \emph{{The grin of Cheshire cat resurgence from supersymmetric localization}}, \href{https://doi.org/10.21468/SciPostPhys.4.2.012}{\emph{SciPost Phys.} {\bfseries 4} (2018) 012} [\href{https://arxiv.org/abs/1711.04802}{{\ttfamily 1711.04802}}].

\bibitem{Dorigoni:2019kux}
D.~Dorigoni and P.~Glass, \emph{{Picard-Lefschetz decomposition and Cheshire Cat resurgence in 3D $ \mathcal{N} $ = 2 field theories}}, \href{https://doi.org/10.1007/JHEP12(2019)085}{\emph{JHEP} {\bfseries 12} (2019) 085} [\href{https://arxiv.org/abs/1909.05262}{{\ttfamily 1909.05262}}].

\bibitem{Fujimori:2022qij}
T.~Fujimori and P.~Glass, \emph{{Resurgence in 2-dimensional Yang\textendash{}Mills and a genus-altering deformation}}, \href{https://doi.org/10.1093/ptep/ptad058}{\emph{PTEP} {\bfseries 2023} (2023) 053B03} [\href{https://arxiv.org/abs/2212.11988}{{\ttfamily 2212.11988}}].

\bibitem{Aniceto:2014hoa}
I.~Aniceto, J.G.~Russo and R.~Schiappa, \emph{{Resurgent Analysis of Localizable Observables in Supersymmetric Gauge Theories}}, \href{https://doi.org/10.1007/JHEP03(2015)172}{\emph{JHEP} {\bfseries 03} (2015) 172} [\href{https://arxiv.org/abs/1410.5834}{{\ttfamily 1410.5834}}].

\bibitem{Honda:2016mvg}
M.~Honda, \emph{{Borel Summability of Perturbative Series in 4D $N=2$ and 5D $N$=1 Supersymmetric Theories}}, \href{https://doi.org/10.1103/PhysRevLett.116.211601}{\emph{Phys. Rev. Lett.} {\bfseries 116} (2016) 211601} [\href{https://arxiv.org/abs/1603.06207}{{\ttfamily 1603.06207}}].

\bibitem{Honda:2017cnz}
M.~Honda and D.~Yokoyama, \emph{{Resumming perturbative series in the presence of monopole bubbling effects}}, \href{https://doi.org/10.1103/PhysRevD.100.025012}{\emph{Phys. Rev. D} {\bfseries 100} (2019) 025012} [\href{https://arxiv.org/abs/1711.10799}{{\ttfamily 1711.10799}}].

\bibitem{Honda:2016vmv}
M.~Honda, \emph{{How to resum perturbative series in 3d N=2 Chern-Simons matter theories}}, \href{https://doi.org/10.1103/PhysRevD.94.025039}{\emph{Phys. Rev. D} {\bfseries 94} (2016) 025039} [\href{https://arxiv.org/abs/1604.08653}{{\ttfamily 1604.08653}}].

\bibitem{Atiyah:1984px}
M.F.~Atiyah and R.~Bott, \emph{{The Moment map and equivariant cohomology}}, \href{https://doi.org/10.1016/0040-9383(84)90021-1}{\emph{Topology} {\bfseries 23} (1984) 1}.

\bibitem{Atiyah:1982fa}
M.F.~Atiyah and R.~Bott, \emph{{The Yang-Mills equations over Riemann surfaces}}, {\emph{Phil. Trans. Roy. Soc. Lond. A} {\bfseries 308} (1982) 523}.

\bibitem{Wilson:1974sk}
K.G.~Wilson, \emph{{Confinement of Quarks}}, \href{https://doi.org/10.1103/PhysRevD.10.2445}{\emph{Phys. Rev. D} {\bfseries 10} (1974) 2445}.

\bibitem{Goddard:1976qe}
P.~Goddard, J.~Nuyts and D.I.~Olive, \emph{{Gauge Theories and Magnetic Charge}}, \href{https://doi.org/10.1016/0550-3213(77)90221-8}{\emph{Nucl. Phys. B} {\bfseries 125} (1977) 1}.

\bibitem{Kapustin:2005py}
A.~Kapustin, \emph{{Wilson-'t Hooft operators in four-dimensional gauge theories and S-duality}}, \href{https://doi.org/10.1103/PhysRevD.74.025005}{\emph{Phys. Rev. D} {\bfseries 74} (2006) 025005} [\href{https://arxiv.org/abs/hep-th/0501015}{{\ttfamily hep-th/0501015}}].

\bibitem{Ghoshal:2014kpa}
D.~Ghoshal, C.~Imbimbo and D.~Kumar, \emph{{Weak Coupling Expansion of Yang-Mills Theory on Recursive Infinite Genus Surfaces}}, \href{https://doi.org/10.1007/JHEP10(2014)181}{\emph{JHEP} {\bfseries 10} (2014) 181} [\href{https://arxiv.org/abs/1407.6380}{{\ttfamily 1407.6380}}].

\bibitem{DURAN1998581}
A.~Durán, R.~Estrada and R.~Kanwal, \emph{Extensions of the poisson summation formula}, \href{https://doi.org/https://doi.org/10.1006/jmaa.1997.5767}{\emph{Journal of Mathematical Analysis and Applications} {\bfseries 218} (1998) 581}.

\bibitem{Benini:2016hjo}
F.~Benini and A.~Zaffaroni, \emph{{Supersymmetric partition functions on Riemann surfaces}}, {\emph{Proc. Symp. Pure Math.} {\bfseries 96} (2017) 13} [\href{https://arxiv.org/abs/1605.06120}{{\ttfamily 1605.06120}}].

\bibitem{Schwarz:1995dg}
A.S.~Schwarz and O.~Zaboronsky, \emph{{Supersymmetry and localization}}, \href{https://doi.org/10.1007/BF02506415}{\emph{Commun. Math. Phys.} {\bfseries 183} (1997) 463} [\href{https://arxiv.org/abs/hep-th/9511112}{{\ttfamily hep-th/9511112}}].

\bibitem{Leeb-Lundberg:2023jsj}
E.H.~Leeb-Lundberg, \emph{{Unstable Instantons in A-model Localization}},  \href{https://arxiv.org/abs/2312.11347}{{\ttfamily 2312.11347}}.

\bibitem{Benini:2015noa}
F.~Benini and A.~Zaffaroni, \emph{{A topologically twisted index for three-dimensional supersymmetric theories}}, \href{https://doi.org/10.1007/JHEP07(2015)127}{\emph{JHEP} {\bfseries 07} (2015) 127} [\href{https://arxiv.org/abs/1504.03698}{{\ttfamily 1504.03698}}].

\bibitem{Benini:2013nda}
F.~Benini, R.~Eager, K.~Hori and Y.~Tachikawa, \emph{{Elliptic genera of two-dimensional N=2 gauge theories with rank-one gauge groups}}, \href{https://doi.org/10.1007/s11005-013-0673-y}{\emph{Lett. Math. Phys.} {\bfseries 104} (2014) 465} [\href{https://arxiv.org/abs/1305.0533}{{\ttfamily 1305.0533}}].

\bibitem{Benini:2013xpa}
F.~Benini, R.~Eager, K.~Hori and Y.~Tachikawa, \emph{{Elliptic Genera of 2d ${\mathcal{N}}$ = 2 Gauge Theories}}, \href{https://doi.org/10.1007/s00220-014-2210-y}{\emph{Commun. Math. Phys.} {\bfseries 333} (2015) 1241} [\href{https://arxiv.org/abs/1308.4896}{{\ttfamily 1308.4896}}].

\bibitem{jeffrey1993localization}
L.C.~Jeffrey and F.C.~Kirwan, \emph{{Localization for nonabelian group actions}}, \href{https://doi.org/10.1016/0040-9383(94)00028-J}{\emph{Topology} {\bfseries 34} (1995) 291} [\href{https://arxiv.org/abs/alg-geom/9307001}{{\ttfamily alg-geom/9307001}}].

\bibitem{Bershtein:2015xfa}
M.~Bershtein, G.~Bonelli, M.~Ronzani and A.~Tanzini, \emph{{Exact results for $ \mathcal{N} $ = 2 supersymmetric gauge theories on compact toric manifolds and equivariant Donaldson invariants}}, \href{https://doi.org/10.1007/JHEP07(2016)023}{\emph{JHEP} {\bfseries 07} (2016) 023} [\href{https://arxiv.org/abs/1509.00267}{{\ttfamily 1509.00267}}].

\bibitem{Hosseini:2018uzp}
S.M.~Hosseini, I.~Yaakov and A.~Zaffaroni, \emph{{Topologically twisted indices in five dimensions and holography}}, \href{https://doi.org/10.1007/JHEP11(2018)119}{\emph{JHEP} {\bfseries 11} (2018) 119} [\href{https://arxiv.org/abs/1808.06626}{{\ttfamily 1808.06626}}].

\end{thebibliography}\endgroup
\end{document}